\newtheorem{theorem}{Theorem}[section]
\newtheorem{lemma}[theorem]{Lemma}
\newtheorem{claim}[theorem]{Claim}
\newtheorem{corollary}[theorem]{Corollary}
\newtheorem{definition}{Definition}  
\newtheorem{assumption}{Assumption}
\newtheoremstyle{example}{\topsep}{\topsep}%
  {}
  {}
  {\bfseries}
  {}
  {\newline}
	{\thmname{#1}\thmnumber{ #2}\thmnote{ #3}}
\theoremstyle{example}
\newcommand{\E}{\mathbb{E}} 
\newcommand{\p}{\mathbb{P}} 
\newcommand{\eps}{\varepsilon} 
\DeclareMathOperator{\diag}{diag}
\DeclareMathOperator{\Var}{Var}
\DeclareMathOperator{\Cov}{Cov}
\DeclareMathOperator{\Bin}{Bin}
\DeclareMathOperator*{\argmax}{arg\,max}
\DeclareMathOperator*{\argmin}{arg\,min}
\DeclareMathOperator*{\relint}{rel\,int}
\DeclareMathOperator{\sgn}{sgn}
\DeclareMathOperator{\aff}{aff}
\DeclareMathOperator{\cvx}{cvx}
\DeclareMathOperator{\cl}{cl}
\DeclareMathOperator{\R}{\mathbb{R}}
\DeclareMathOperator{\N}{\mathbb{N}}
\DeclareMathOperator{\Q}{\mathbb{Q}}
\DeclareMathOperator{\mcO}{\mathcal{O}}
\DeclareMathOperator{\GS}{GS}
\begin{document}

\title{A Smooth Transition from Powerlessness to Absolute Power}
\author{	Elchanan Mossel
	\thanks{University of California, Berkeley and Weizmann Institute of Science; \texttt{mossel@stat.berkeley.edu}; supported by NSF (DMS 1106999) and by DOD ONR grant N000141110140.}
	\and
	Ariel D. Procaccia
	\thanks{Carnegie Mellon University; \texttt{arielpro@cs.cmu.edu}.}
	\and
	Mikl\'os Z. R\'acz
	\thanks{University of California, Berkeley; \texttt{racz@stat.berkeley.edu}; supported by a UC Berkeley Graduate Fellowship and by NSF (DMS 0548249).}
}
\date{\today}
\maketitle


\begin{abstract}
We study the phase transition of the coalitional manipulation problem for generalized scoring rules. Previously it has been shown that, under some conditions on the distribution of votes, if the number of manipulators is $o \left( \sqrt{n} \right)$, where $n$ is the number of voters, then the probability that a random profile is manipulable by the coalition goes to zero as the number of voters goes to infinity, whereas if the number of manipulators is $\omega \left( \sqrt{n} \right)$, then the probability that a random profile is manipulable goes to one. 
Here we consider the critical window, where a coalition has size $c \sqrt{n}$, and we show that as $c$ goes from zero to infinity, the limiting probability that a random profile is manipulable goes from zero to one in a smooth fashion, i.e., there is a smooth phase transition between the two regimes. This result analytically validates recent empirical results, and suggests that deciding the coalitional manipulation problem may be of limited computational hardness in practice.
\end{abstract}


\section{Introduction} 

Finding ``good'' voting systems which satisfy some natural requirements is one of the main goals in social choice theory. This problem is increasingly relevant in the area of artificial intelligence and in computer science more broadly, where virtual elections are now an established tool for preference aggregation (see, e.g.,~\cite{CP11}).

A naturally desirable property of a voting system is \emph{strategyproofness} (a.k.a.\ nonmanipulability): no voter should benefit from voting strategically, i.e., voting not according to her true preferences. However, Gibbard~\cite{gibbard1973manipulation} and Satterthwaite~\cite{satterthwaite1975strategy} showed that no reasonable voting system can be strategyproof. Before stating their result, let us specify the problem more formally. 

We consider $n$ voters electing a single winner among $m$ candidates. The voters specify their opinion by ranking the candidates, and the winner is determined according to some predefined \emph{social choice function} (SCF) $f : S_m^n \to \left[m\right]$ of all the voters' rankings, where $S_m$ denotes the set of all possible total orderings of the $m$ candidates. We call a collection of rankings by the voters a \emph{ranking profile}. We say that a SCF is \emph{manipulable} if there exists a ranking profile where a voter can achieve a more desirable outcome of the election according to her true preferences by voting in a way that does not reflect her true preferences.

The Gibbard-Satterthwaite theorem states that any SCF which is not a dictatorship (i.e., not a function of a single voter), and which allows at least three candidates to be elected, is manipulable. This has contributed to the realization that it is unlikely to expect truthfulness in voting. Consequently, there have been many branches of research devoted to understanding the extent of the manipulability of voting systems, and to finding ways of circumventing the negative results.

One approach, introduced by Bartholdi, Tovey, and Trick~\cite{bartholdi1989computational}, suggests computational complexity as a barrier against manipulation: a SCF may not be manipulable in practice if it is hard for a voter to compute a manipulative vote. 
A significant body of work focuses on the worst-case complexity of manipulation (see the survey by Faliszewski and Procaccia~\cite{faliszewski2010ai}). 
Here we are interested specifically in the \emph{coalitional manipulation} problem, where a group of voters can change their votes in unison, with the goal of making a given candidate win. 
Various variations of this problem are known to be $\mathcal{NP}$-hard under many of the common SCFs~\cite{CSL07,XZPC+09,BNW11}.

Crucially, this line of work focuses on worst-case complexity. 
While worst-case hardness of manipulation is a desirable property for a SCF to have, it does not tell us much about \emph{typical} instances of the problem---is it \emph{usually} easy or hard to manipulate? 
A recent line of research on average-case manipulability has been questioning the validity of such worst-case complexity results. 
The goal of this alternative line of work is to show that there are no ``reasonable'' voting rules that are computationally hard to manipulate \emph{on average}. 
Specifically, the goal is to rule out the following informal statement: there are ``good'' voting rules that are hard to manipulate on average under any ``sufficiently rich'' distribution over votes. 

Taking this point of view, showing easiness of manipulation under a restricted class of distributions---such as i.i.d.\ votes or even uniform votes (the impartial culture assumption)---is interesting, even if these do not necessarily capture all possible real-world elections. 
Specifically, if we show that manipulation is easy under such distributions, then any average-case hardness result would necessarily have to make some unnatural technical assumptions to avoid these distributions. 
Studying such restricted distributions over votes is indeed exactly what some recent papers have done.

For the coalitional manipulation problem, 
Procaccia and Rosenschein~\cite{procaccia2007average} first suggested that it is trivial to determine whether manipulation is possible for most coalitional manipulation instances, from a typical-case computational point of view; one can make a highly informed guess purely based on the number of manipulators. 
Specifically, they studied a setting where there is a distribution over votes (which satisfies some conditions), and concentrated on a family of SCFs known as positional scoring rules. 
They showed that if the size of the coalition is $o \left( \sqrt{n} \right)$, then with probability converging to 1 as $n \to \infty$, the coalition is powerless, i.e., it cannot change the outcome of the election. 
In contrast, if the size of the coalition is $\omega \left( \sqrt{n} \right)$ (and $o \left( n \right)$), then with probability converging to 1 as $n \to \infty$, the coalition is all-powerful, i.e., it can elect any candidate. 
Later Xia and Conitzer~\cite{xia2008generalized} proved an analogous result for so-called generalized scoring rules, a family that contains almost all common voting rules. 
See also related work by Peleg~\cite{peleg1979note}, Slinko~\cite{Slin04}, Pritchard and Slinko~\cite{pritchard2006average}, and Pritchard and Wilson~\cite{PW09}. We discuss additional related work in Section~\ref{sec:related}.

Our primary interest in this paper is to understand the critical window that these papers leave open, when the size of the coalition is $\Theta \left( \sqrt{n} \right)$. 
Specifically, we are interested in the \emph{phase transition} in the probability of coalitional manipulation, when the size of the coalition is $c \sqrt{n}$ and $c$ varies from zero to infinity, i.e., the \emph{transition from powerlessness to absolute power}. 

In the past few decades there has been much research on the connection between phase transitions and computationally hard problems (see, e.g.,~\cite{fu1986application,cheeseman1991really,achlioptas2005rigorous}). 
In particular, it is often the case that the computationally hardest problems can be found at critical values of a sharp phase transition (see, e.g.,~\cite{gomes2006randomness} for an overview). 
On the other hand, smooth phase transitions are often found in connection with computationally easy (polynomial) problems, such as 2-coloring~\cite{achlioptas1999threshold} and 1-in-2 SAT~\cite{walsh2002interface}. 
Thus understanding the phase transition in this critical window may shed light on where the computationally hardest problems lie.

Recently, Walsh~\cite{walsh2011really} \emph{empirically} analyzed two well-known voting rules---veto and single transferable vote (STV)---and found that there is a smooth phase transition between the two regimes. Specifically, Walsh studied coalitional manipulation with unweighted votes for STV and weighted votes for veto, and sampled from a number of distributions in his experiments, including i.i.d.\ distributions, correlated distributions, and votes sampled from real-world elections. Our main result complements and improves upon Walsh's analysis in two ways; while Walsh's results show how the phase transition looks like concretely for veto and STV, we \emph{analytically} show that the phase transition is indeed smooth for \emph{any generalized scoring rule} (including veto and STV) when the votes are i.i.d. This suggests that deciding the coalitional manipulation problem may not be computationally hard in practice.

\subsection{Our results} 

We now present our results, but first let us formally specify the setup of the problem. We denote a ranking profile by $\sigma = \left( \sigma_1, \dots, \sigma_n \right) \in S_m^n$, and for a candidate $a$, define $W_a = \left\{ \sigma \in S_m^n \, \middle| \, f\left( \sigma \right) = a \right\}$, the set of ranking profiles where the outcome of $f$ is $a$. Our setup and assumptions are the following.
\begin{assumption}\label{ass:1}
 We assume that the number of candidates, $m$, is constant.
\end{assumption}
\begin{assumption}\label{ass:2}
 We assume that the SCF $f$ is \emph{anonymous}, i.e., it treats each voter equally.
\end{assumption}
\begin{assumption}\label{ass:3}
 We assume that the votes of voters are i.i.d., according to some distribution $p$ on $S_m$. Furthermore, we assume that there exists $\delta > 0$ such that for every $\pi \in S_m$, $p \left( \pi \right) \geq \delta$ (necessarily $\delta \leq 1/ m!$).
\end{assumption}
If we were to assume only these, then our setup would include uninteresting cases, such as when $f$ is a constant---i.e., no matter what the votes are, a specific candidate wins. 
Another less interesting case is when the probability of a given candidate winning vanishes as $n \to \infty$---we can then essentially forget about this candidate for large $n$ (in the sense that a coalition of size $\Omega \left( n \right)$ would be necessary to make this candidate win). 
To exclude these and focus on the interesting cases, we make an additional assumption which concerns both the SCF and the distribution of votes.
\begin{assumption}\label{ass:4}
 We assume that there exists $\eps > 0$ such that for every $n$ and for every candidate $a \in \left[m\right]$, the probability of $a$ being elected is at least $\eps > 0$, i.e., $\p \left( W_a \right) \geq \eps$ (necessarily $\eps \leq 1 / m$).
\end{assumption}
All four assumptions are satisfied when the distribution is uniform (i.e., under the impartial culture assumption) and the SCF is close to being neutral (i.e., neutral up to some tie-breaking rules); in particular, they hold for all commonly used SCFs.
The assumptions are somewhat more general than this, although the i.i.d.\ assumption remains a restrictive one. 
However, as discussed earlier, even showing easiness of manipulation under such a restricted class of distributions is interesting. 

As mentioned before, we are interested in the case when the coalition has size $c \sqrt{n}$ for some constant $c$. Define the probabilities
\begin{align*}
 \underline{q}_n \left( c \right) &:= \p \left( \text{ some coalition of size } c \sqrt{n} \text{ can elect any candidate } \right),\\
 \overline{q}_n \left( c \right) &:= \p \left( \text{ some coalition of size } c \sqrt{n} \text{ can change the outcome of the election } \right),\\
 \underline{r}_n \left( c \right) &:= \p \left( \text{ a specific coalition of size } c \sqrt{n} \text{ can elect any candidate } \right),\\
 \overline{r}_n \left( c \right) &:= \p \left( \text{ a specific coalition of size } c \sqrt{n} \text{ can change the outcome of the election } \right),
\end{align*}
and let
\[
 \underline{q} \left( c \right) := \lim_{n \to \infty} \underline{q}_n \left( c \right), \quad \overline{q} \left( c \right) := \lim_{n \to \infty} \overline{q}_n \left( c \right), \quad \underline{r} \left( c \right) := \lim_{n \to \infty} \underline{r}_n \left( c \right), \quad \overline{r} \left( c \right) := \lim_{n \to \infty} \overline{r}_n \left( c \right),
\]
provided these limits exist. Clearly $\underline{q}_n \left( c \right) \leq \overline{q}_n \left( c \right)$, $\underline{r}_n \left( c \right) \leq \overline{r}_n \left( c \right)$, $\underline{r}_n \left( c \right) \leq \underline{q}_n \left( c \right)$, and $\overline{r}_n \left( c \right) \leq \overline{q}_n \left( c \right)$. 

Before we describe our results, which deal with these quantities, we first explain how these relate to the various variants of the coalitional manipulation problem. 
In the coalitional manipulation problem the coalition is fixed, and thus the relevant quantities are $\underline{r}_n \left( c \right)$ and $\overline{r}_n \left( c \right)$. 
Closely related is the problem of determining the \emph{margin of victory}, which is the minimum number of voters who need to change their votes to change the outcome of the election. 
Also related is the problem of \emph{bribery}, the minimum number of voters who need to change their votes to make a given candidate win. 
The main difference between these problems is that in coalitional manipulation the coalition is fixed, whereas in the latter two problems the coalition is not fixed. 
Hence the relevant quantities for studying the latter two are $\underline{q}_n \left( c \right)$ and $\overline{q}_n \left( c \right)$. 
Our tools also allow us to deal with other related quantities (such as \emph{microbribery}~\cite{faliszewski2009llull}), but we focus our attention on the four quantities described above.

Our first result analyzes the case when the size of the coalition is $c \sqrt{n}$ for large $c$. 
We show that if $c$ is large enough, then with probability close to 1, a specific coalition of size $c \sqrt{n}$ can elect any candidate. 
This holds for any SCF that satisfies the above (mild) restrictions.

\begin{theorem}\label{thm:main1}
 Assume that Assumptions~\ref{ass:1},~\ref{ass:2},~\ref{ass:3}, and~\ref{ass:4} hold. 
For any $\eta > 0$ there exists a constant $c = c\left( \eta, \delta, \eps, m \right)$ such that $\underline{r}_n \left( c \right) \geq 1 - \eta$ for every $n$. 
In particular, we can choose 
$$
c = \left( 4 / \delta \right) \log \left( 2 m! / \eta \right) \left[ \sqrt{\log \left( 2m / \eta \right)} + \sqrt{\log \left( 2 / \eps \right)} \right].
$$ 
It follows that
\[
 \lim_{c \to \infty} \liminf_{n} \underline{r}_n \left( c \right) = 1.
\]
\end{theorem}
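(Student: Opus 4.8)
The plan is to show that a specific coalition of size $c\sqrt{n}$ can, with high probability, force any desired candidate $a$ to win, provided $c$ is a sufficiently large constant. The key structural fact I would exploit is that a generalized scoring rule is determined by the signs of polynomially many linear order statistics of the vote counts; more concretely, the outcome depends on the vector of ``scores'' (coordinates of a generalized scoring function summed over voters), and the winner changes only when this score vector crosses one of finitely many hyperplanes. Since the scores are sums of i.i.d.\ bounded vectors, each coordinate has fluctuations of order $\sqrt{n}$ around its mean, and by Assumption~\ref{ass:4} the ``competitive'' region---where more than one candidate is within $O(\sqrt{n})$ of winning---is where all the probability mass lives. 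The idea is that by re-casting a coalition of $\Theta(\sqrt{n})$ votes, one can shift each relevant score coordinate by a controlled amount on the order of $\sqrt{n}$, which is exactly the right scale to overcome the typical gaps.

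First I would fix a target candidate $a$ and consider the effect of the coalition all voting for a particular ranking $\pi$ that is maximally favorable to $a$ in the generalized scoring sense. I would quantify, via a Chernoff/Hoeffding bound, how much a block of $k = c\sqrt{n}$ identical votes moves each score coordinate: deterministically by $k$ times a fixed nonzero amount in the favorable direction. Second, I would bound the probability that the ``gap'' to be overcome---the deficit of $a$ relative to the current winner, measured in the appropriate score coordinates---exceeds this deterministic shift. Here the factor $\log(2m!/\eta)$ in the stated choice of $c$ strongly suggests a union bound over the $m!$ possible rankings the non-coalition voters could collectively favor (or over candidates and score coordinates), combined with a sub-Gaussian tail: the deficit is a sum of at most $n$ i.i.d.\ bounded terms, so $\p(\text{deficit} > t\sqrt{n})$ decays like $e^{-c' t^2}$, which explains the $\sqrt{\log(2m/\eta)}+\sqrt{\log(2/\eps)}$ term (the $\eps$ enters because one conditions on, or accounts for, the event $W_b$ of each current winner $b$, which has probability at least $\eps$). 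The $1/\delta$ factor enters because the deterministic per-vote shift in the favorable direction is at least a constant multiple of $\delta$: Assumption~\ref{ass:3} guarantees every ranking, including the favorable one, has mass bounded below, so the score coordinates genuinely respond.

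The main obstacle, I expect, is handling the combinatorial structure of a general generalized scoring rule cleanly: unlike a simple positional scoring rule, the winner is a function of many order-statistic comparisons, so ``the gap to overcome'' is not a single scalar but must be controlled uniformly over all the relevant hyperplanes, and one must argue that pushing the scores in one coordinated direction simultaneously helps $a$ against \emph{every} rival. I would address this by using the defining property of generalized scoring rules---that the winner depends only on the \emph{order pattern} of a fixed finite family of linear functionals of the aggregate vote---and showing that a suitable favorable vote strictly increases $a$'s functionals relative to all competitors by a linear-in-$k$ amount, then taking a union bound over the finitely many functionals and rivals. Once the probability of failure for target $a$ is bounded by $\eta/m!$ (or $\eta/m$), a final union bound over the (at most $m$) choices of target candidate gives $\underline{r}_n(c)\ge 1-\eta$ uniformly in $n$. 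The concluding limit statement $\lim_{c\to\infty}\liminf_n \underline{r}_n(c)=1$ is then immediate, since $\eta$ was arbitrary and $\underline{r}_n\le 1$ always.
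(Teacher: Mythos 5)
Your proposal has a genuine gap: it proves (at best) a different, weaker theorem. Theorem~\ref{thm:main1} assumes only that $f$ is \emph{anonymous} (Assumptions~\ref{ass:1}--\ref{ass:4}); it is explicitly an extension beyond generalized scoring rules, and the GSR machinery (score coordinates, hyperplanes, order patterns) only enters the paper in Theorem~\ref{thm:main2}. Your entire mechanism -- pick a ranking ``maximally favorable to $a$,'' have the coalition cast it in a block, and argue that this pushes $a$'s score functionals past every rival's -- presupposes structure the theorem does not grant. Even restricted to GSRs the key step is unjustified: the decision function $g$ is an arbitrary compatible function of the order pattern, so there is no monotone notion of a vote that ``helps $a$ against every rival,'' and the region $\tilde W_a$ need not be reachable by pushing the aggregate score in a single fixed direction. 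You would need to know \emph{where} $W_a$ sits relative to the realized profile, and your proposal supplies no tool for that.

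The paper's argument is quite different and is worth internalizing because it shows why each constant appears where it does. One considers the normalized Hamming distance $d_{W_a}(\sigma)$ from the random profile to the set $W_a$; this is $1/n$-Lipschitz, so McDiarmid gives concentration at scale $1/\sqrt n$, and the hypothesis $\p(W_a)\ge\eps$ forces $\E d_{W_a}\le\sqrt{\log(2/\eps)}/\sqrt n$ (a set of measure $\ge\eps$ must contain a point whose distance, namely $0$, is within the concentration window of the mean). Hence with probability $\ge 1-\eta/2$ the profile is within Hamming distance $B'\sqrt n$ of \emph{every} $W_a$ simultaneously, where $B'=\sqrt{\log(2m/\eta)}+\sqrt{\log(2/\eps)}$ -- this is where $\eps$ really enters, not through conditioning on the current winner. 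Separately, a Chernoff bound plus a union bound over the $m!$ rankings shows that the specific coalition of size $DB'\sqrt n$, $D=(4/\delta)\log(2m!/\eta)$, contains at least $B'\sqrt n$ voters of \emph{each} ranking; this is where $\delta$ and $\log(2m!/\eta)$ enter, not through any ``per-vote shift.'' By anonymity, the coalition can then realize any target vote-count vector within Hamming distance $B'\sqrt n$ of the current one -- in particular one lying in $W_a$ for any desired $a$. Note that having the coalition vote a single ranking in unison, as you propose, would not suffice even here: reaching the nearby profile in $W_a$ generally requires redistributing votes across many rankings, which is exactly why the coalition's own composition must cover all of $S_m$.
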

This result extends theorems of Procaccia and Rosenschein~\cite{procaccia2007average}, and Xia and Conitzer~\cite{xia2008generalized}, from scoring rules and generalized scoring rules, respectively, to anonymous SCFs.

Our second result deals with the case when the size of the coalition is $c \sqrt{n}$ for small $c$, and the transition as $c$ goes from 0 to $\infty$. 
Here we assume additionally that $f$ is a generalized scoring rule (to be defined in Section~\ref{sec:gen_sc_r}); this is needed because there exist (pathological) anonymous SCFs for which the result below does not hold (see the beginning of Section~\ref{sec:small} for an example).

\begin{theorem}\label{thm:main2}
 Assume that Assumptions~\ref{ass:1},~\ref{ass:2},~\ref{ass:3}, and~\ref{ass:4} hold, and furthermore that $f$ is a generalized scoring rule. Then:
    \begin{enumerate}[(1)]
      \item\label{limit} The limits $\underline{q} \left( c \right)$, $\overline{q} \left( c \right)$, $\underline{r} \left( c \right)$ and $\overline{r} \left( c \right)$ exist.
      \item\label{small} There exists a constant $K = K \left( f, \delta \right) < \infty$ such that $\overline{q} \left( c \right) \leq K c$; in particular, $\lim_{c \to 0} \overline{q} \left( c \right) = 0$. 
      \item\label{smooth} For all $0 < c < \infty$, $0 < \underline{q} \left( c \right) \leq \overline{q} \left( c \right) < 1$ and $0 < \underline{r} \left( c \right) \leq \overline{r} \left( c \right) < 1$, and furthermore $\underline{q} \left( c \right)$, $\overline{q} \left( c \right)$, $\underline{r} \left( c \right)$ and $\overline{r} \left( c \right)$ are all continuously differentiable in $c$ with bounded derivative. 
    \end{enumerate}
\end{theorem}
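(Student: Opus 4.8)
\emph{Overview.} The plan is to use the defining feature of a generalized scoring rule to reduce all four limiting quantities to probabilities of the shape $\p(Z\in c\,\mathcal S)$, where $Z$ is a fixed multivariate Gaussian and $\mathcal S$ a fixed finite union of polyhedra, and then to analyze such probabilities by elementary convex geometry together with a transport‑theorem computation of the $c$‑derivative. Recall that a generalized scoring rule is specified by a scoring map $f\colon S_m\to\R^k$ and a map $g\colon\R^k\to[m]$ constant on each cell of the central arrangement $\mathcal H=\{z_p=z_q:1\le p<q\le k\}$, the winner on $\sigma$ being $g\big(\sum_{i\le n}f(\sigma_i)\big)$. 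Put $X_i=f(\sigma_i)$, $\mu=\E X_1$, $\Sigma=\Cov X_1$, $S_n=\sum_{i\le n}X_i$. Since $p(\pi)\ge\delta$, the Gaussian $Z\sim N(0,\Sigma)$ is nondegenerate on $V:=\operatorname{span}\{f(\pi)-f(\pi')\}$; after passing to a canonical form discarding coordinates $p,q$ with $f(\cdot)_p\equiv f(\cdot)_q$, every hyperplane of $\mathcal H$ meets $V$ properly. Grouping coordinates by the value of $\mu$, in $S_n=n\mu+\sqrt n\,(S_n-n\mu)/\sqrt n$ the order between groups is eventually pinned by $\mu$ and the order within groups is governed by $(S_n-n\mu)/\sqrt n\Rightarrow Z$; hence the limiting winner is $\Gamma(Z)$, where $\Gamma$ depends only on the within‑group orders and is constant on the cells—cones with apex $0$—of the restricted within‑group sub‑arrangement $\mathcal H_0$. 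I would then observe that a coalition of size $t=\lfloor c\sqrt n\rfloor$ changes $S_n$ to $S_n-u+w$ with $w\in A_t:=\{\sum_{i\le t}f(\pi_i):\pi_i\in S_m\}$; for $\underline{q}_n,\overline{q}_n$ the removal $u$ also ranges over all of $A_t$ (one chooses the coalition's composition, each $\pi$ occurring $\ge\delta n\gg t$ times among honest votes), whereas for $\underline{r}_n,\overline{r}_n$ the coalition is fixed and $u=\sum_{i\le t}X_i$, so $u/\sqrt n\to c\mu$. As $A_t$ is within Hausdorff distance $O(1)$ of $tP$ with $P:=\cvx(f(S_m))$, dividing by $\sqrt n$ and invoking the CLT (an $O(\sqrt n)$ perturbation cannot alter between‑group orders) would give, with $D:=P-\mu$ and $E:=P-P$,
\[
 \overline{r}_n(c)\to\p(\exists\,d\in D:\Gamma(Z+cd)\neq\Gamma(Z)),\qquad \underline{r}_n(c)\to\p(\forall a\ \exists\,d\in D:\Gamma(Z+cd)=a),
\]
and analogously $\overline{q}_n(c)\to\p(\exists e\in E:\Gamma(Z+ce)\neq\Gamma(Z))$ and $\underline{q}_n(c)\to\p(\exists v'\in P\ \forall a\ \exists v\in P:\Gamma(Z+c(v-v'))=a)$. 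Here the manipulability events differ from polyhedral events with $Z$‑null boundary by $Z$‑null sets (and, allowing for the $O(1/\sqrt n)$ discrepancies above, by sets of vanishing probability), so the limits exist—this is part~\eqref{limit}. Finally, since the cells of $\mathcal H_0$ and the sets $\{\Gamma=b\}$ are cones, each limit can be rewritten as $\p(Z\in c\,\mathcal S)$ for a fixed finite union of polyhedra $\mathcal S\subseteq V$: e.g.\ for $\overline{q}$, up to a $Z$‑null set $\{\Gamma(z+ce)\neq\Gamma(z)\text{ for some }e\in E\}=\bigcup_{\Gamma(R)\neq\Gamma(R')}\big(\mathrm{int}\,R'\cap(\mathrm{int}\,R-cE)\big)=c\cdot\bigcup_{\Gamma(R)\neq\Gamma(R')}\big(\mathrm{int}\,R'\cap(\mathrm{int}\,R-E)\big)$, each $\mathrm{int}\,R$ being a cone; the other three are similar, the $\underline{q}$ case absorbing an extra Minkowski summand $cP$ which again scales with $c$.

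\emph{Parts \eqref{small} and the strict bounds in \eqref{smooth}.} For part~\eqref{small}: if $\Gamma(Z+ce)\neq\Gamma(Z)$ for some $e\in E$ then the segment $[Z,Z+ce]$ meets a hyperplane of $\mathcal H_0$, so $Z$ lies within $c\rho$ of it, $\rho:=\max_{e\in E}|e|$; summing the anti‑concentration estimate $\p(|Z_p-Z_q|\le c\rho\sqrt2)\le 2\sqrt2\,c\rho/(\sqrt{2\pi}\,\Var(Z_p-Z_q)^{1/2})$ over the finitely many within‑group pairs (variances positive by the canonical form) yields $\overline{q}(c)\le Kc$ with $K=K(f,\delta)<\infty$. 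For positivity: by Assumption~\ref{ass:4} and nondegeneracy of $Z$, every $a$ is the $\Gamma$‑value of some full‑dimensional cell $R_a$; picking $z_a\in\mathrm{int}\,R_a$ near $0$, a small ball $B\subseteq V$ about $0$ has $\p(Z\in B)>0$ and, for every $z\in B$ and every $a$, $z+c\,d=z_a\in R_a$ with $d:=(z_a-z)/c\in D$ (as $cD$ contains a ball about $0$), so $\underline{r}(c)\ge\p(Z\in B)>0$. For $\overline{q}(c)<1$: each full‑dimensional cell, being an unbounded cone, contains a ball of radius $>c\rho$, and on the positive‑probability event that $Z$ lies in one, $Z+cE$ stays in a single cell, so the profile is unmanipulable. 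The remaining strict inequalities follow from $\underline{r}\le\underline{q}\le\overline{q}$ and $\underline{r}\le\overline{r}\le\overline{q}$.

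\emph{Differentiability (part \eqref{smooth}).} It remains to show that $c\mapsto\p(Z\in c\,\mathcal S)$ is $C^1$ with bounded derivative for any finite union of polyhedra $\mathcal S\subseteq V$. By inclusion--exclusion it suffices to treat a single polyhedron $\mathcal S=\{z\in V:\langle u_j,z\rangle\le b_j,\ j\in J\}$ with $u_j\in V$, $|u_j|=1$; then $c\,\mathcal S=\{\langle u_j,z\rangle\le cb_j\}$, whose $j$‑th facet has outward normal $u_j$ and, at a point $z$ on it, normal velocity $\langle z/c,u_j\rangle=b_j$. Writing $\phi$ for the density of $Z$ on $V$, the transport (Leibniz) theorem for moving domains would give
\[
 \frac{d}{dc}\,\p(Z\in c\,\mathcal S)=\sum_{j\in J}b_j\int_{\,c\,\mathcal S\,\cap\,\{\langle u_j,z\rangle=cb_j\}}\phi\,,
\]
a surface integral over each facet. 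Each is at most $\int_{\{\langle u_j,z\rangle=cb_j\}}\phi=f_{\langle u_j,Z\rangle}(cb_j)\le 1/\sqrt{2\pi\,\Var\langle u_j,Z\rangle}$, which is finite (as $u_j\neq0$ and $Z$ is nondegenerate on $V$) and independent of $c$; hence the derivative exists and is uniformly bounded. It is continuous because each facet integral is the Gaussian measure of a polyhedron in a hyperplane scaling with $c$, hence continuous in $c$ by induction on $\dim V$ (base case $\dim V=1$: $\p(Z\in[c\alpha,c\beta])$, which is smooth).

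\emph{Main difficulty.} The CLT and Carath\'eodory reductions and the Gaussian estimates are routine; the crux will be the reduction in the first paragraph—verifying that each manipulability event is, up to $Z$‑null sets, a finite union of dilates $c\,P_\ell$ of fixed polyhedra, which forces one to handle the union‑of‑cells structure of the win‑regions, the ambiguity of $\Gamma$ on lower‑dimensional faces, and (for $\underline{q}$) the steerable removal—together with the nondegeneracy needed to keep the per‑facet surface integrals finite and $c$‑uniform; both are secured by restricting to the support $V$ of $Z$ and to a canonical form of the generalized scoring rule.
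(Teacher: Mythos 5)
Your proposal is correct in its essentials and reaches the same destination by a genuinely different (and in places more explicit) route. The paper first proves that generalized scoring rules coincide with ``hyperplane rules'' on the probability simplex $\Delta^{m!}$ (Lemma~\ref{lem:equiv}, which requires the Kemperman decomposition of the appendix), takes the Gaussian limit of the empirical vote distribution there, observes that the boundary becomes a central hyperplane arrangement, and then exploits the homogeneity $\alpha(\lambda\tilde{x})=\lambda\alpha(\tilde{x})$ (resp.\ $\beta,\varphi,\psi$) to write each limit as the Gaussian measure of a set growing continuously in $c$. You instead work directly in the $k$-dimensional score space, where the relevant arrangement is simply the coordinate-comparison arrangement $\{z_p=z_q\}$ restricted to the support of the limiting Gaussian; this bypasses the hyperplane-rule equivalence and the Kemperman machinery entirely, and your splitting into between-group orders (pinned by the mean) and within-group orders (governed by $Z$) is exactly the score-space analogue of the paper's Lemma~\ref{lem:bdry_limit} on which affine hyperplanes ``vanish to infinity.'' Your displacement sets $cD=c(P-\mu)$ and $cE=c(P-P)$ correspond precisely to the paper's $R_c(\tilde{x})$ and the $L^1$ ball of radius $2c$. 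Two points where your treatment is arguably stronger: (i) your formula for $\underline{q}$ keeps the coalition's removal $v'$ fixed across all target candidates, which is the literal reading of ``some coalition can elect any candidate,'' whereas the paper's $\beta$ allows the displacement to depend on the target; and (ii) your differentiability argument via the transport theorem, expressing the derivative as a sum of facet surface integrals each dominated by a one-dimensional Gaussian density, is considerably more explicit than the paper's qualitative ``the set grows continuously and the derivative is continuous, hence bounded'' argument. The anti-concentration bound for Part~\ref{small} and the convex-geometric arguments for $0<\underline{r}(c)$ and $\overline{q}(c)<1$ (a full-dimensional cone cell far from its boundary) match the paper's. The one place where both you and the paper wave hands at the same spot is the passage from finite $n$ to the limit (lattice discrepancies of $A_t$, $Z$-nullity of the boundaries of the limiting events); since you flag this explicitly and it is a Portmanteau-type routine, I do not count it as a gap.
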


In words, Part~\ref{small} means that if $c$ is small enough then with probability close to 1 no coalition of size $c \sqrt{n}$ can change the outcome of the election, and the statements about $\overline{r}$ and $\underline{r}$ in Part~\ref{smooth} mean that the coalitional manipulation problem has a smooth phase transition: as the number of manipulators increases, the probabilities that a coalition has some power, and that it has absolute power, increase smoothly. Parts \ref{limit} and \ref{small} of the theorem simply make a result of Xia and Conitzer~\cite{xia2008generalized} more precise, by extending the analysis to the $\Theta(\sqrt{n})$ regime. More importantly, in the proofs of these statements we introduce the machinery needed to establish Part~\ref{smooth}, which is our main result. 

Since the coalitional manipulation problem does \emph{not} have a sharp phase transition, Theorem~\ref{thm:main2} can be interpreted as suggesting that realistic distributions over votes are likely to yield coalitional manipulation instances that are tractable in practice, even if the size of the coalition concentrates on the previously elusive $\Theta(\sqrt{n})$ regime; this is true for any generalized scoring rule, and in particular for almost all common social choice functions (an exception is Dodgson's rule). 
This interpretation has a negative flavor in further strengthening the conclusion that worst-case complexity is a poor barrier to manipulation. 

However, the complexity glass is in fact only half empty. The probability that the margin of victory is at most $c\sqrt{n}$ is captured by the quantity $\overline{q}_n$, hence Part~\ref{smooth} of Theorem~\ref{thm:main2} also implies that the margin of victory problem has a smooth phase transition. As recently pointed out by Xia~\cite{xia2012computing}, efficiently solving the margin of victory problem could help in \emph{post-election audits}---used to determine whether electronic elections have resulted in an incorrect outcome due to software or hardware bugs---and its tractability is in fact desirable.

The methods we use are flexible, and can be extended to various setups of interest that do not directly satisfy our assumptions above, for instance single-peaked preferences. 
Consider a one-dimensional political spectrum represented by the interval $\left[0,1\right]$, and fix the location of the candidates. 
Assume voters are uniformly distributed on the interval, independently of each other. 
For technical reasons, this distribution does not satisfy our assumptions, since there will be rankings $\pi \in S_m$ such that $p\left( \pi \right) = 0$; however, our tools allow us to deal with this setting as well. 
For instance, if the locations of the $m$ candidates are $\left\{ \frac{1}{2m}, \frac{3}{2m}, \dots, \frac{2m-1}{2m} \right\}$, then our results hold (with appropriate quantitative modifications). 
Similarly, if the locations were something else, then there would exist a subset of candidates who have an asymptotically nonvanishing probability of winning, and the same results hold restricted to this subset of candidates.

Finally, we discuss the role of tie-breaking in our setup, since this is often an important issue when studying manipulation. 
However, since we consider manipulation by coalitions of size $c \sqrt{n}$, ties where there exist a constant number of voters such that if their votes are changed appropriately there is no longer a tie, are not relevant. 
Indeed, our tools allow us to extend the results of Theorem~\ref{thm:main2} to a class of SCFs slightly beyond generalized scoring rules, and, in particular, these allow for arbitrary tie-breaking rules (see Section~\ref{sec:almost} for details).

\subsection{Additional related work}\label{sec:related} 

A recent line of research with an average-case algorithmic flavor also suggests that manipulation is indeed typically easy; see, e.g., the work of \ Kelly~\cite{kelly1993almost}, Conitzer and Sandholm~\cite{conitzer2006nonexistence}, Procaccia and Rosenschein~\cite{procaccia2007junta}, and Zuckerman et al.~\cite{ZPR09} for results on certain restricted classes of SCFs. A different approach, initiated by Friedgut, Kalai, Keller and Nisan~\cite{friedgut2008elections,friedgut2011quantitative}, who studied the fraction of ranking profiles that are manipulable, also suggests that manipulation is easy on average; see further Xia and Conitzer~\cite{xia2008sufficient}, Dobzinski and Procaccia~\cite{dobzinski2008frequent}, Isaksson, Kindler and Mossel~\cite{isaksson2010geometry}, and Mossel and R\'acz~\cite{mossel2011quantitative}. We refer to the survey by Faliszewski and Procaccia~\cite{faliszewski2010ai} for a detailed history of the surrounding literature. See also related literature in economics, e.g.,~\cite{
good1975estimating,chamberlain1981note,myatt2007theory}.

Recent work by Xia~\cite{xia2012computing} is independent from, and closely related to, our work. As mentioned above, Xia's paper is concerned with computing the margin of victory in elections. He focuses on computational complexity questions and approximation algorithms, but one of his results is similar to Parts~\ref{limit} and \ref{small} of Theorem~\ref{thm:main2}. However, our analysis is completely different; our approach facilitates the proof of Part~\ref{smooth} of the theorem, which is our main contribution. An even more recent (and also independent) manuscript by Xia~\cite{xia2012many} considers similar questions for generalized scoring rules and captures additional types of strategic behavior (such as control), but again, crucially, this work does not attempt to understand the phase transition (nor does it subsume our Theorem~\ref{thm:main1}).

\section{Large coalitions}\label{sec:large} 

Without further ado, we prove Theorem~\ref{thm:main1}. 
The main idea is to observe that for i.i.d.\ distributions, the Hamming distance of a random ranking profile from a fixed subset of ranking profiles concentrates around its mean. 
The theorem follows from standard concentration inequalities. 

\begin{proof}[Proof of Theorem~\ref{thm:main1}] 
For $\sigma, \sigma' \in S_m^n$, define
\[
 d\left( \sigma, \sigma' \right) = \frac{1}{n} \sum_{i=1}^{n} \mathbf{1} \left[ \sigma_i \neq \sigma'_i \right],
\]
i.e., $d\left( \sigma, \sigma' \right)$ is $1/n$ times the Hamming distance of $\sigma$ and $\sigma'$. If $U$ is a subset of ranking profiles and $\sigma$ is a specific ranking profile then define $d_{U} \left( \sigma \right) = \min_{\sigma' \in U} d\left( \sigma, \sigma' \right)$. The function $d_U$ is Lipschitz with constant $1/n$, and therefore by McDiarmid's inequality we have the following concentration inequality:
\begin{equation}\label{eq:conc_d_U}
  \p \left( \left| d_U \left( \sigma \right) - \E d_U \right| \geq c \right) \leq 2 \exp \left( -2c^2 n \right)
\end{equation}
for any $c > 0$ and $U \subseteq S_m^n$. Suppose $U \subseteq S_m^n$ has measure at least $\eps$, i.e., $U$ is such that $\p \left( \sigma \in U \right) \geq \eps$, and take $\gamma$ such that $2 \exp \left( - 2 \gamma^2 n \right) < \eps$, e.g., let $\gamma = \sqrt{\log \left( 2/ \eps \right)} / \sqrt{n}$. Then \eqref{eq:conc_d_U} implies that there exists $\sigma \in U$ such that $\left| d_U \left( \sigma \right) - \E d_U \right| \leq \gamma$, but since $d_U \left( \sigma \right) = 0$, this means that $\E d_U \leq \gamma$. So for such a set $U$, we have
\[
 \p \left( d_U \left( \sigma \right) > \gamma + c \right) \leq \exp \left( - 2 c^2 n \right)
\]
for any $c > 0$. Choosing $c = B / \sqrt{n}$ and defining $B' =  B + \sqrt{\log \left( 2/ \eps \right)}$ we get that
\begin{equation}\label{eq:conc2}
 \p \left( d_U \left( \sigma \right) > B' / \sqrt{n} \right) \leq \exp \left( - 2 B^2 \right).
\end{equation}
In the language of the usual Hamming distance, this means that the probability that the ranking profile needs to be changed in at least $B'\sqrt{n}$ coordinates to be in $U$ is at most $\exp \left( - 2 B^2 \right)$, which can be made arbitrarily small by choosing $B$ large enough.

By our assumption, $\p \left( \sigma \in W_a \right) \geq \eps$ for every $a$, and therefore by \eqref{eq:conc2} and a union bound we get
\[
 \p \left( \exists a : d_{W_a} \left( \sigma \right) > B' / \sqrt{n} \right) \leq m \exp \left( - 2 B^2 \right).
\]
By choosing $B = \sqrt{\log\left( 2m / \eta \right)}$, this probability is at most $\eta / 2$.

Consider a specific coalition of size $D B' \sqrt{n}$, where $D = D\left( \delta, m \right)$ will be chosen later. Using Chernoff's bound and a union bound, with probability close to one, for every possible ranking $\pi$ the coalition has at least $B' \sqrt{n}$ voters with the ranking $\pi$:
\begin{multline*}
 \p\left( \exists \pi \in S_m  : \text{ coalition of size } D B' \sqrt{n} \text{ has less than } B'\sqrt{n} \text{ voters with ranking } \pi \right)\\
\begin{aligned}
 &\leq m! \p \left( \Bin \left( DB' \sqrt{n}, \delta \right) < B' \sqrt{n} \right)
 \leq m! \exp \left( -  \left( 1 - 1 / D \delta \right)^2 DB' \sqrt{n} \delta / 2 \right) \\
 &\leq m! \exp \left( - \left( 1 - 1/ D \delta \right)^2 D \delta / 2 \right),
\end{aligned}
\end{multline*}
where $\Bin \left( DB' \sqrt{n}, \delta \right)$ denotes a binomial random variable with parameters $DB' \sqrt{n}$ and $\delta$, and where we used our assumption that for every voter the probability for every ranking is at least $\delta > 0$. Choosing $D = \left( 4 / \delta \right) \log \left( 2 m! / \eta \right)$, this probability is at most $\eta / 2$.

By the anonymity of $f$, the outcome only depends on the number of voters voting according to each ranking. Consequently, if $\sigma$ is such that it is at a distance of at most $B'/\sqrt{n}$ away from each $W_a$, and where for each ranking $\pi$ there are at least $B'\sqrt{n}$ voters in the coalition with ranking $\pi$, then the coalition is able to achieve any outcome. Using the above and a union bound this happens with probability at least $1 - \eta$. 
\end{proof}

\section{Small coalitions and the phase transition}\label{sec:small} 

This section is almost entirely devoted to the proof of Theorem~\ref{thm:main2}, but it also includes some helpful definitions, examples, and intuitions. 

Consider the following example of a SCF. 
For $a \in \left[m\right]$ let $n_a \left( \sigma \right)$ denote the number of voters who ranked candidate $a$ on top in the ranking profile $\sigma$. 
Define the SCF $f$ by $f \left( \sigma \right) = \sum_{a=1}^{m} a n_a \left( \sigma \right) \mod m$. 
This SCF is clearly anonymous (since it only depends on the number of voters voting according to specific rankings), and moreover it is easy to see that \emph{any} single voter can \emph{always} elect any candidate. 

This example shows that, in general, we cannot have a matching lower bound for the size of the manipulating coalition on the order of $\sqrt{n}$. However, this is an artificial example (one would not consider such a voting system in real life), and we expect that a matching lower bound holds for most reasonable SCFs.

Xia and Conitzer~\cite{xia2008generalized} introduced a large class of SCFs called generalized scoring rules, which include most commonly occurring SCFs. In the following we introduce this class of SCFs, provide an alternative way of looking at them (as so-called ``hyperplane rules''), and show that for this class of SCFs if the coalition has size $c \sqrt{n}$ for small enough $c$, then the probability of being able to change the outcome of the election can be arbitrarily close to zero. At the end of the section we then prove the smooth transition as stated in Part~\ref{smooth} of Theorem~\ref{thm:main2}.

\subsection{Generalized scoring rules, hyperplane rules, and their equivalence}\label{sec:GSR_hyp_rules} 

\subsubsection{Generalized scoring rules}\label{sec:gen_sc_r} 

We now define generalized scoring rules.
\begin{definition}\label{def:equiv}
 For any $y,z \in \R^k$, we say that $y$ and $z$ are \emph{equivalent}, denoted by $y \sim z$, if for every $i,j \in \left[ k \right]$, $y_i \geq y_j$ if and only if $z_i \geq z_j$.
\end{definition}
\begin{definition}
 A function $g : \R^k \to \left[m \right]$ is \emph{compatible} if for any $y \sim z$, $g \left( y \right) = g \left( z \right)$.
\end{definition}
That is, for any function $g$ that is compatible, $g \left( y \right)$ is completely determined by the total preorder of $\left\{ y_1, \dots, y_k \right\}$ (a total preorder is an ordering in which ties are allowed).
\begin{definition}[Generalized scoring rules]
 Let $k \in \N$, $f : S_m \to \R^k$ (called a \emph{generalized scoring function}), and $g : \R^k \to \left[ m \right]$ where $g$ is compatible ($g$ is called a \emph{decision function}). The functions $f$ and $g$ determine the \emph{generalized scoring rule} $\GS \left( f, g \right)$ as follows: for $\sigma \in S_m^n$, let
\[
 \GS \left( f, g \right) \left( \sigma \right) := g \left( \sum_{i=1}^n f \left( \sigma_i \right) \right).
\]
\end{definition}
From the definition it is clear that every generalized scoring rule (GSR) is anonymous.

\subsubsection{Hyperplane rules}\label{sec:hyp_rules} 

{\bf Preliminaries and notation.} In the following, for a SCF let us write $f \equiv f_n$, i.e., let us explicitly note that $f$ is a function on $n$ voters; also let us write $\sigma \equiv \sigma^n$. Since the SCF $f_n$ is anonymous, the outcome only depends on the numbers of voters who vote according to particular rankings.  Let $D_n$ denote the set of points in the probability simplex $\Delta^{m!}$ for which all coordinates are integer multiples of $1/n$. Let us denote a typical element of the probability simplex $\Delta^{m!}$ by $x = \left\{ x_{\pi} \right\}_{\pi \in S_m}$. For a ranking profile $\sigma^n$, let us denote the corresponding element of the probability simplex by $x\left( \sigma^n \right)$, i.e., for all $\pi \in S_m$,
\[
 x\left( \sigma^n \right)_{\pi} = \frac{1}{n} \sum_{i=1}^n \mathbf{1} \left[ \sigma_i = \pi \right].
\]
By our assumptions the outcome of $f_n$ only depends on $x\left( \sigma^n \right)$, so by abuse of notation we may write that $f_n : \Delta^{m!}|_{D_n} \to \left[ m \right]$ with $f_n \left( \sigma^n \right) = f_n \left( x\left( \sigma^n \right) \right)$.

We are now ready to define hyperplane rules.

\begin{definition}[Hyperplane rules]
 Fix a finite set of affine hyperplanes of the simplex $\Delta^{m!}$: $H_1, \dots, H_\ell$. Each affine hyperplane partitions the simplex into three parts: the affine hyperplane itself and two open halfspaces on either side of the affine hyperplane. Thus the affine hyperplanes $H_1, \dots, H_{\ell}$ partition the simplex into finitely many (at most $3^{\ell}$) regions. Let $F : \Delta^{m!} \to \left[m \right]$ be a function which is constant on each such region. Then the sequence of SCFs $\left\{ f_n \right\}_{n \geq 1}$, $f_n : S_m^n \to \left[ m \right]$, defined by
\[
 f_n \left( \sigma^n \right) = F \left( x \left( \sigma^n \right) \right)
\]
is called a hyperplane rule induced by the affine hyperplanes $H_1, \dots, H_{\ell}$ and the function $F$.
\end{definition}

A function $F : \Delta^{m!} \to \left[m \right]$ naturally partitions the simplex $\Delta^{m!}$ into $m$ parts based on the outcome of $F$. (For hyperplane rules this partition is coarser than the partition of $\Delta^{m!}$ induced by the affine hyperplanes $H_1, \dots, H_{\ell}$.) We abuse notation and denote these parts by $\left\{W_a \right\}_{a \in \left[m\right]}$. The following definition will be useful for us.

\begin{definition}[Interior and boundaries of a partition induced by $F$]
 We say that $x \in \Delta^{m!}$ is an interior point of the partition $\left\{W_a \right\}_{a \in \left[m\right]}$ induced by $F$ if there exists $\alpha > 0$ such that for all $y \in \Delta^{m!}$ for which $\left| x - y \right| \leq \alpha$, we have $F\left( x \right) = F \left( y \right)$. Otherwise, we say that $x \in \Delta^{m!}$ is on the boundary of the partition, which we denote by $B$.
\end{definition}

For a hyperplane rule the boundary $B$ is contained in the union of the corresponding affine hyperplanes. Conversely, suppose $F : \Delta^{m!} \to \left[m \right]$ is an arbitrary function and the sequence of (anonymous) SCFs $\left\{ f_n \right\}_{n \geq 1}$, $f_n : S_m^n \to \left[ m \right]$ is defined by $f_n \left( \sigma^n \right) = F \left( x \left( \sigma^n \right) \right)$. If the boundary $B$ of $F$ is contained in the union of finitely many affine hyperplanes of $\Delta^{m!}$, then $F$ is not necessarily a hyperplane rule, but there exists a hyperplane rule $\hat{F}$ such that $F$ and $\hat{F}$ agree everywhere except perhaps on the union of the finitely many affine hyperplanes.

\subsubsection{Equivalence}\label{sec:equiv} 

Xia and Conitzer~\cite{xia2009finite} gave a characterization of generalized scoring rules: a SCF is a generalized scoring rule if and only if it is anonymous and finitely locally consistent (see Xia and Conitzer~\cite[Definition 5]{xia2009finite}). This characterization is related to saying that generalized scoring rules are the same as hyperplane rules, yet we believe that spelling this out explicitly is important, because the geometric viewpoint of hyperplane rules is somewhat different, and in this probabilistic context it is also more flexible.

\begin{lemma}\label{lem:equiv}
 The class of generalized scoring rules coincides with the class of hyperplane rules.
\end{lemma}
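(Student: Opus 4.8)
The plan is to prove the two inclusions separately, each by an explicit construction that translates between the vector/preorder language of generalized scoring rules and the geometric language of hyperplane rules.

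First I would show that every generalized scoring rule is a hyperplane rule. Let $\GS(f,g)$ be given with $f : S_m \to \R^k$ and $g : \R^k \to [m]$ compatible. The key observation is that $\sum_{i=1}^n f(\sigma_i) = n \sum_{\pi \in S_m} x(\sigma^n)_\pi f(\pi)$, i.e., the vector of scores is ($n$ times) a fixed linear function $L(x) := \sum_{\pi} x_\pi f(\pi)$ of the point $x = x(\sigma^n) \in \Delta^{m!}$. Since $g$ is compatible, $g(ny) = g(y)$ for $n > 0$ (scaling preserves the total preorder), so $\GS(f,g)(\sigma^n) = g(L(x(\sigma^n)))$. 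Now $g(L(x))$ depends only on the total preorder of the coordinates of $L(x)$, i.e., on which of the finitely many conditions $L(x)_i \geq L(x)_j$ hold, for $i,j \in [k]$. Each such condition $\{x : L(x)_i - L(x)_j \geq 0\}$ is a closed halfspace bounded by the affine hyperplane $H_{ij} := \{x \in \Delta^{m!} : L(x)_i = L(x)_j\}$ (some of these may be empty or all of $\Delta^{m!}$, which is harmless). Taking $\{H_{ij}\}_{i \neq j}$ as the finite family of affine hyperplanes, the function $x \mapsto g(L(x))$ is constant on every region cut out by them, so setting $F := g \circ L$ exhibits $\GS(f,g)$ as the hyperplane rule induced by $\{H_{ij}\}$ and $F$.

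For the converse, I would start from a hyperplane rule given by affine hyperplanes $H_1, \dots, H_\ell$ of $\Delta^{m!}$ and a function $F$ constant on each region. Write each $H_s$ as the zero set (within the affine hull of the simplex) of an affine functional $a_s(x) = \langle u_s, x \rangle + b_s$. I want to realize the sign pattern $(\sgn a_1(x), \dots, \sgn a_\ell(x))$ — which determines the region and hence $F(x)$ — as the comparison pattern of coordinates of some vector-valued score. The trick is to take $k = 2\ell$ and define $f : S_m \to \R^{2\ell}$ so that, for $x \in \Delta^{m!}$, the coordinate pair $(2s-1, 2s)$ of $\sum_\pi x_\pi f(\pi)$ equals $(a_s(x) + C, -a_s(x) + C)$ for a large constant $C$ (absorbing the constants $b_s$ into the components and using $\sum_\pi x_\pi = 1$ on the simplex, and using $C$ large enough that we stay in a regime where the comparison of coordinate $2s-1$ with coordinate $2s$ recovers $\sgn a_s(x)$). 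One also needs extra coordinates, or a suitable choice of $C$ across different $s$, so that the full preorder of the $2\ell$ coordinates determines the tuple $(\sgn a_1(x), \dots, \sgn a_\ell(x))$ and nothing spurious; a clean way is to shift the $s$-th block by an $s$-dependent additive constant so the blocks do not interfere. Then define $g$ on $\R^{2\ell}$ to be compatible by reading off, from the preorder, the sign tuple, then the region, then applying $F$; on vectors not arising as such scores $g$ can be defined arbitrarily subject to compatibility. This yields $\GS(f,g)(\sigma^n) = F(x(\sigma^n)) = f_n(\sigma^n)$.

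The main obstacle is the converse direction, specifically making the encoding of $\ell$ independent affine sign conditions into a single total preorder both correct and non-interfering: one must ensure the comparisons \emph{across} different blocks do not destroy the information and that $g$ is genuinely well-defined as a compatible function (i.e., invariant under $\sim$) on \emph{all} of $\R^{2\ell}$, not just on the image of the score map. Handling the boundary hyperplanes themselves — where some $a_s(x) = 0$, giving a tie in the preorder — requires that $F$'s value on $H_s$ is consistent with the "tie" branch of $g$; this is where the remark in the excerpt that a general $F$ with hyperplane boundary need only agree with a genuine hyperplane rule \emph{off} the hyperplanes is used, so I would either assume $F$ is a bona fide hyperplane rule (constant on closed-or-relatively-open regions as convenient) or prove the equivalence up to this negligible set, matching the statement's intent.
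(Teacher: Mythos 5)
Your proposal is correct, and in one direction it takes a genuinely different---and more elementary---route than the paper. For the inclusion ``generalized scoring rule $\Rightarrow$ hyperplane rule,'' the paper treats the preorder classes $R_{\mcO} = \left\{ x : f\left( x \right) \sim \mcO \right\}$ as an abstract decomposition of the positive orthant into finitely many disjoint convex cones and then invokes a theorem of Kemperman (reproduced in Appendix~\ref{sec:kemperman}) to conclude that any such decomposition must be cut out by finitely many hyperplanes. You instead observe that each class is \emph{explicitly} polyhedral: the total preorder of $L\left( x \right) = \sum_{\pi} x_{\pi} f\left( \pi \right)$ is determined by the signs of the affine functionals $L\left( x \right)_i - L\left( x \right)_j$, so the arrangement $\left\{ H_{ij} \right\}$ witnesses the hyperplane-rule structure directly, with no cone-decomposition theorem needed (degenerate $H_{ij}$ that are empty or all of $\Delta^{m!}$ are discarded, as you note). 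This is a genuine simplification, at the cost of losing the structural statement about cone decompositions that the paper establishes along the way. For the converse inclusion, your argument is the same in spirit as the paper's---encode $\sgn a_s\left( x \right)$ as a coordinate comparison and let $g$ read off the sign vector---but the paper's implementation is cleaner: it uses $\ell + 1$ coordinates with the last identically zero, so each sign is recovered by comparing one coordinate against that fixed reference. Your paired-coordinate variant $\left( a_s\left( x \right) + C_s, -a_s\left( x \right) + C_s \right)$ also works, but the constants $C_s$ and the worry about cross-block interference are unnecessary: your $g$ depends only on the within-block comparisons of $y_{2s-1}$ versus $y_{2s}$, and any function of a subcollection of the pairwise comparisons is automatically a function of the total preorder, hence compatible. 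Likewise the tie-handling caveat at the end is moot: the definition of a hyperplane rule already assigns $F$ a definite value on every region, including the regions contained in the hyperplanes themselves, so the ``tie branch'' of $g$ is simply that value and no weakening ``up to a negligible set'' is required.
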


\begin{proof}
 First let us show that every hyperplane rule is a generalized scoring rule. Let us consider the hyperplane rule induced by affine hyperplanes $H_1, \dots, H_{\ell}$ of the simplex $\Delta^{m!}$, and the function $F : \Delta^{m!} \to \left[m \right]$. The affine hyperplanes of $\Delta^{m!}$ can be thought of as hyperplanes of $\R^{m!}$ that go through the origin---abusing notation we also denote these by $H_1, \dots, H_{\ell}$. Let $u_1, \dots, u_{\ell}$ denote unit normal vectors of these hyperplanes.

 We need to define functions $f$ and $g$ such that for every ranking profile $\sigma^n \in S_m^n$, $\GS \left( f,g \right) \left( \sigma^n \right) = F \left( x \left( \sigma^n \right) \right)$. We will have $f : S_m \to \R^{\ell + 1}$ and $g : \R^{\ell + 1} \to \left[ m \right]$. Coordinates $1, \dots, \ell$ of $f$ correspond to hyperplanes $H_1, \dots, H_{\ell}$, while the last coordinate of $f$ will always be 0 (this is a technical necessity to make sure that the function $g$ is compatible). Let us look at the coordinate corresponding to hyperplane $H_j$ with normal vector $u_j$. For $\pi \in S_m$ define
\[
 \left( f \left( \pi \right) \right)_j \equiv \left( f \left( \pi \right) \right)_{H_j} \equiv \left( f \left( \pi\right) \right)_{u_j} := \left( u_j \right)_{\pi},
\]
where the coordinates of $\R^{m!}$ are indexed by elements of $S_m$. Then
\[
 \left( f \left( \sigma^n \right) \right)_j := \sum_{i=1}^n \left( f \left( \sigma_i \right) \right)_j = \sum_{i=1}^{n} \left( u_j \right)_{\sigma_i} = n \left( u_j \cdot x \left( \sigma^n \right) \right).
\]
The sign of $\left( f \left( \sigma^n \right) \right)_j$ thus tells us which side of the hyperplane $H_j$ the point $x \left( \sigma^n \right)$ lies on. We define $g \left( y \right)$ for all $y \in \R^{\ell+1}$ such that $y_{\ell + 1} = 0$; then the requirement that $g$ be compatible defines $g$ for all $y \in \R^{\ell +1}$. For $x\in \R$, define $\sgn \left( x \right)$ to be 1 if $x > 0$, $-1$ if $x < 0$, and 0 if $x = 0$. 

To define $g \left( y_1, \dots, y_{\ell}, 0 \right)$, look at the vector $\left( \sgn \left( y_1 \right), \dots, \sgn \left( y_{\ell} \right) \right)$. This vector determines a region in $\Delta^{m!}$ in the following way: if $\sgn \left( y_j \right) = 1$, then the region lies in the same open halfspace as $u_j$, if $\sgn \left( y_j \right) = - 1$ then the region lies in the open halfspace which does not contain $u_j$, and finally if $y_j = 0$, then the region lies in the hyperplane $H_j$. Now we define $g \left( y_1, \dots, y_{\ell}, 0 \right)$ to be the value of $F$ on the region of $\Delta^{m!}$ defined by $\left( \sgn \left( y_1 \right), \dots, \sgn \left( y_{\ell} \right) \right)$. The value of $g \left( y_1, \dots, y_{\ell}, 0 \right)$ is well-defined since $F$ is constant in each such region. Moreover, if we take $y \sim z$ with $y_{\ell + 1} = z_{\ell + 1} = 0$, then necessarily  $\left( \sgn \left( y_1 \right), \dots, \sgn \left( y_{\ell} \right) \right) =  \left( \sgn \left( z_1 \right), \dots, \
sgn \left( z_{\ell} \right) \right)$, and thus $g \left( y \right) = g \left( z \right)$: so $g$ is compatible (this is where we used the extra coordinate).

Now let us show that every generalized scoring rule is a hyperplane rule. Suppose a generalized scoring rule is given by functions $f : S_m \to \R^k$ and $g : \R^k \to \left[ m \right]$. For a ranking profile $\sigma^n \in S_m^n$, define $f \left( \sigma^n \right) := \sum_{i=1}^n f\left( \sigma_i \right) = n \sum_{\pi \in S_m} f \left( \pi \right) \left( x \left( \sigma^n \right) \right)_{\pi}$; in this way we can view $f$ as a function mapping $\N_{\geq 0}^{m!} \setminus \left\{ 0 \right\}$ to $\R^k$ (and hence can also view $\GS \left( f, g \right)$ as a function mapping $\N_{\geq 0}^{m!} \setminus \left\{ 0 \right\}$ to $\left[ m \right]$). Since this mapping  is homogeneous, we may extend the domain of $f$ (and hence that of $\GS \left( f, g \right)$) to $\Q_{\geq 0}^{m!} \setminus \left\{ 0 \right\}$ in the natural way. 

For a total preorder $\mcO$, let $R_{\mcO} = \left\{ x \in \Q_{\geq 0}^{m!} \setminus \left\{ 0 \right\} : f \left( x \right) \sim \mcO \right\}$. By definition, if $x,y \in R_{\mcO}$ then $g\left( f \left( x \right) \right) = g \left( f \left( y \right) \right)$, i.e., $\GS \left( f, g \right)$ is constant in each region $R_{\mcO}$. Each region $R_{\mcO}$ is a $\Q$-convex cone, i.e.\ if $x,y \in R_{\mcO}$ and $\lambda \in \Q \cap \left[ 0, 1 \right]$, then $\lambda x + \left( 1 - \lambda \right) y \in R_{\mcO}$, and furthermore if $\mu \in \Q_{>0}$, then $\mu x \in R_{\mcO}$ (both of these properties follow immediately from Definition~\ref{def:equiv}). Thus we can write $\Q_{\geq 0}^{m!} \setminus \left\{ 0 \right\}$ as the disjoint union of the $\Q$-convex cones $\left\{ R_{\mcO} \right\}_{\mcO}$. The only way to do this is by taking finitely many hyperplanes of $\R^{m!}$ and cutting $\Q_{\geq 0}^{m!} \setminus \left\{ 0 \right\}$ using these hyperplanes; a precise statement of this can be found in 
Appendix~\ref{sec:kemperman}. This essentially follows from a result by Kemperman~\cite[Theorem 2]{kemperman1986decomposing}---to keep the paper self-contained we reproduce in Appendix~\ref{sec:kemperman} his results and proof, and show how the statement above follows from his results. Since our function is homogeneous, we need only look at the values of $\GS \left( f, g \right)$ on the simplex $\Delta^{m!}$. By the above, the simplex is divided into regions $\left\{ R_{\mcO} \cap \Delta^{m!} \right\}_{\mcO}$ via affine hyperplanes of $\Delta^{m!}$, and the function $\GS \left( f, g \right)$ is constant on $R_{\mcO} \cap \Delta^{m!}$ for each total preorder $\mcO$, so $\GS \left( f, g \right)$ is indeed a hyperplane rule.
\end{proof}

\subsubsection{Examples}\label{sec:examples} 

Most commonly used SCFs are generalized scoring rules / hyperplane rules, including all positional scoring rules, instant-runoff voting, Coombs' method, contingent vote, the Kem\'eny-Young method, Bucklin voting, Nanson's method, Baldwin's method, Copeland's method, maximin, and ranked pairs. 
Some of these examples were already shown by Xia and Conitzer~\cite{xia2008generalized,xia2009finite}, but nevertheless in Appendix~\ref{sec:app-examples} we detail explanations of many of these examples. 
The main reason for this is that the perspective of a hyperplane rule arguably makes these explanations simpler and clearer. 
A rule that does not fit into this framework is Dodgson's rule, which is not homogeneous (see, e.g.,~\cite{brandt2009some}), and therefore it is not a hyperplane rule.

\subsection{Small coalitions for generalized scoring rules}\label{sec:small-subsec} 

We now show that for generalized scoring rules, a coalition of size $c \sqrt{n}$ for small enough $c$ can only change the outcome of the election with small probability. By the equivalence above, we can work in the framework of hyperplane rules.

We consider two metrics on $\Delta^{m!}$: the $L^1$ metric, denoted by $d_1$ or $\left\| \cdot \right\|_1$, and the $L^2$ metric, denoted by $d_2$ or $\left\| \cdot \right\|_2$. The $L^1$ metric is important in this setting, since changing the votes of voters corresponds to moving in the $L^1$ metric on $\Delta^{m!}$; this connection is formalized in the following lemma.

\begin{lemma}\label{lem:L1-Ham}
 Let $\sigma^n, \tau^{n} \in S_m^n$. Then $d_1 \left( x \left( \sigma^n \right), x \left( \tau^n \right) \right) \leq \frac{2}{n} d_{H} \left( \sigma^n, \tau^n \right)$, where $d_H$ denotes Hamming distance, i.e., $d_{H} \left( \sigma^n, \tau^n \right) = \sum_{i=1}^n \mathbf{1} \left[ \sigma_i \neq \tau_i \right]$. Furthermore, if $y \in D_n$, then there exists $\hat{\tau}^n \in S_m^n$ such that $x \left( \hat{\tau}^n \right) = y$ and $d_1 \left( x\left( \sigma^n \right), y \right) = \frac{2}{n} d_{H} \left( \sigma^n, \hat{\tau}^n \right)$.
\end{lemma}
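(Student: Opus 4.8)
The plan is to treat the two claims separately; both reduce to elementary bookkeeping with the integer vote counts $n_\pi := n\, x(\sigma^n)_\pi$ and $m_\pi := n\, x(\tau^n)_\pi$ (and, for the second part, $m_\pi := n\, y_\pi$), which are nonnegative integers summing to $n$ by definition of $x(\cdot)$ and of $D_n$.

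For the inequality $d_1(x(\sigma^n), x(\tau^n)) \le \frac{2}{n} d_H(\sigma^n, \tau^n)$ I would simply expand coordinate by coordinate. Writing $x(\sigma^n)_\pi - x(\tau^n)_\pi = \frac1n \sum_{i=1}^n (\mathbf{1}[\sigma_i = \pi] - \mathbf{1}[\tau_i = \pi])$ and applying the triangle inequality gives $d_1(x(\sigma^n), x(\tau^n)) \le \frac1n \sum_{i=1}^n \sum_{\pi \in S_m} |\mathbf{1}[\sigma_i = \pi] - \mathbf{1}[\tau_i = \pi]|$; for each fixed $i$ the inner sum is $0$ when $\sigma_i = \tau_i$ and exactly $2$ otherwise, so summing over $i$ yields $\frac2n d_H(\sigma^n,\tau^n)$. (Equivalently and more conceptually: changing a single coordinate of a profile moves its empirical distribution by exactly $2/n$ in $L^1$, so the bound follows by changing, one at a time, the at most $d_H(\sigma^n,\tau^n)$ coordinates on which the two profiles disagree.)

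For the second claim I would construct $\hat\tau^n$ from $\sigma^n$ by a minimal reassignment of voters. Split $S_m$ into the \emph{surplus} rankings $\{\pi : n_\pi > m_\pi\}$ and the \emph{deficit} rankings $\{\pi : n_\pi < m_\pi\}$; since $\sum_\pi n_\pi = \sum_\pi m_\pi = n$, the total surplus $N := \sum_\pi (n_\pi - m_\pi)^+$ equals the total deficit $\sum_\pi (m_\pi - n_\pi)^+$. Now leave every voter untouched except for a choice, for each surplus ranking $\pi$, of exactly $n_\pi - m_\pi$ of the voters currently voting $\pi$; reassign these $N$ chosen voters among the deficit rankings so that each deficit ranking $\pi'$ receives exactly $m_{\pi'} - n_{\pi'}$ of them, which is possible precisely because the surplus and deficit totals agree. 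The resulting profile $\hat\tau^n$ has exactly $m_\pi$ voters voting $\pi$ for every $\pi$, hence $x(\hat\tau^n) = y$, and it differs from $\sigma^n$ in exactly the $N$ reassigned coordinates, so $d_H(\sigma^n, \hat\tau^n) = N$. Since $d_1(x(\sigma^n), y) = \frac1n \sum_\pi |n_\pi - m_\pi| = \frac{2N}{n}$, we get $d_1(x(\sigma^n), y) = \frac2n d_H(\sigma^n, \hat\tau^n)$, as required.

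I do not anticipate a genuine obstacle: the first part is a one-line estimate, and the only point needing care in the second part is verifying that a reassignment realizing the prescribed deficit counts exactly exists — which it does, because surplus and deficit totals coincide — and that no coordinates beyond the $N$ reassigned ones are disturbed. Applying the first inequality with $\tau^n = \hat\tau^n$ also confirms, a posteriori, that this $\hat\tau^n$ is a closest profile to $\sigma^n$ with empirical distribution $y$, though that is not needed for the statement.
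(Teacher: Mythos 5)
Your proposal is correct and follows essentially the same route as the paper: the first inequality comes down to the observation that each coordinate where the profiles differ contributes exactly $2/n$ to the $L^1$ distance (the paper phrases this via a telescoping interpolation, you via swapping the order of summation, but these are the same estimate), and your surplus/deficit reassignment for the second part is precisely the paper's construction of $\hat{\tau}^n$. No gaps.
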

\begin{proof}
Let $\pi^0 = \sigma^n$, and for $i = 1, \dots, n$, define the ranking profile $\pi^i$ as $\pi^i = \left( \tau_1, \dots, \tau_i, \sigma_{i+1}, \dots, \sigma_n \right)$. By definition, $\pi^n = \tau^n$. The desired inequality then follows from the triangle inequality:
\[
 d_1 \left( x \left( \sigma^n \right), x\left( \tau^n \right) \right) = d_1 \left( x \left( \pi^0 \right), x \left( \pi^n \right) \right) \leq \sum_{i=1}^n d_1 \left( x\left( \pi^{i-1} \right), x \left( \pi^i \right) \right) = \sum_{i=1}^n \frac{2}{n} \mathbf{1} \left[ \sigma_i \neq \tau_i \right] = \frac{2}{n} d_H \left( \sigma^n, \tau^n \right).
\]
For the second part of the lemma, construct $\hat{\tau}^n$ as follows. For each $\pi \in S_m$, let $I_{\pi} := \left\{ i \in \left[ n \right] : \sigma_i = \pi \right\}$. If $x \left( \sigma^n \right)_{\pi} \leq y_{\pi}$, then for every $i \in I_{\pi}$, let $\hat{\tau}_i = \pi$. If $x \left( \sigma^n \right)_{\pi} > y_{\pi}$, then choose a subset of indices $I'_{\pi} \subset I_{\pi}$ of size $\left| I'_{\pi} \right| = n y_{\pi}$, and for every $i \in I'_{\pi}$, let $\hat{\tau}_i = \pi$. Finally, define the rest of the coordinates of $\hat{\tau}^n$ so that $x \left( \hat{\tau}^n \right) = y$. The construction guarantees that then $d_1 \left( x \left( \sigma^n \right), y \right) = \frac{2}{n} d_H \left( \sigma^n, \hat{\tau}^n \right)$.
\end{proof}

It is therefore natural to define distances from the boundary $B$ using the $L^1$ metric:
\begin{definition}[Blowup of boundary]
 For $\alpha > 0$, we define the blowup of the boundary $B$ by $\alpha$ to be
\[
 B^{+\alpha} = \left\{ y \in \Delta^{m!} : \exists x \in B \text{ such that } \left\| x - y \right\|_1 \leq \alpha \right\}.
\]
\end{definition}
In order for some coalition to be able to change the outcome of the election at a given ranking profile, the point on the simplex corresponding to this ranking profile needs to be sufficiently close to the boundary~$B$; this is formulated in the following lemma.
\begin{lemma}\label{lem:necessary}
 Suppose we have $n$ voters, a coalition of size $k$, and the ranking profile is $\sigma^n \in S_m^n$, which corresponds to the point $x \left( \sigma^n \right) \in \Delta^{m!}$ on the probability simplex. A necessary condition for the coalition to be able to change the outcome of the election from this position is that $x \left( \sigma^n \right) \in B^{+2k/n}$. Conversely, if $x \left( \sigma^n \right) \in B^{+\left( 2k - m! \right) / n}$, then there exists a coalition of size $k$ that can change the outcome of the election.
\end{lemma}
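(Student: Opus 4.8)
\medskip
\noindent\textbf{Proof proposal.} Both directions pass through Lemma~\ref{lem:L1-Ham}, which lets me translate freely between ``a coalition of size $k$ recasts its votes'' and ``the point $x(\sigma^n)\in\Delta^{m!}$ is moved by $L^1$-distance at most $2k/n$''. For necessity, suppose the coalition can change the outcome; then there is a profile $\tau^n$ agreeing with $\sigma^n$ outside the coalition, so $d_H(\sigma^n,\tau^n)\le k$, with $F(x(\tau^n))\ne F(x(\sigma^n))$. By Lemma~\ref{lem:L1-Ham}, $d_1(x(\sigma^n),x(\tau^n))\le 2k/n$. Consider the segment $\gamma(t)=(1-t)x(\sigma^n)+t\,x(\tau^n)$ for $t\in[0,1]$, which lies in $\Delta^{m!}$ by convexity and whose endpoints carry different values of $F$. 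If $x(\sigma^n)\in B$ there is nothing to prove; otherwise $x(\sigma^n)$ is an interior point of the partition $\{W_a\}$, and I would set $a=F(x(\sigma^n))$ and let $t_0$ be the supremum of those $t$ with $F\equiv a$ on $\gamma([0,t])$. Using that the set of interior points of the partition is relatively open in $\Delta^{m!}$, a short case analysis (ruling out that $\gamma(t_0)$ is interior to $W_a$, or interior to any other $W_b$) shows $\gamma(t_0)\in B$; since $d_1(x(\sigma^n),\gamma(t_0))=t_0\,d_1(x(\sigma^n),x(\tau^n))\le 2k/n$, this gives $x(\sigma^n)\in B^{+2k/n}$.

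For sufficiency, suppose $x(\sigma^n)\in B^{+(2k-m!)/n}$ and fix $z\in B$ with $d_1(x(\sigma^n),z)\le(2k-m!)/n$; write $a=F(x(\sigma^n))$. Since $z$ lies on the boundary of the partition, points whose $F$-value differs from $F(z)$---hence also from $a$---occur arbitrarily close to $z$; invoking the hyperplane structure I would take such a point $y$ in the interior of a full-dimensional region, with $F(y)=b\ne a$ and so far from that region's boundary that every point of $D_n$ within $L^1$-distance $m!/n$ of $y$ still lies in the region. Such a lattice point $w\in D_n$ exists because the $L^1$-covering radius of $D_n$ in $\Delta^{m!}$ is at most $m!/n$ (round each coordinate of $y$ down to a multiple of $1/n$ and hand the leftover mass, a multiple of $1/n$, back to the coordinates with the largest fractional parts). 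Then $F(w)=b\ne a$, and choosing $y$ close enough to $z$ gives $d_1(x(\sigma^n),w)\le(2k-m!)/n+m!/n=2k/n$, the slack from ``close enough'' being harmless since $D_n$ is finite. Finally Lemma~\ref{lem:L1-Ham} supplies $\tau^n$ with $x(\tau^n)=w$ and $d_H(\sigma^n,\tau^n)=\tfrac{n}{2}d_1(x(\sigma^n),w)\le k$; the coalition consisting of the at most $k$ voters where $\sigma^n$ and $\tau^n$ differ can recast its votes to produce $\tau^n$, changing the winner from $a$ to $b$.

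The main obstacle is the discretization in the sufficiency direction: one must ensure that rounding onto the lattice $D_n$ does not push the point back across $B$ onto the $a$-side. This is exactly what the weaker blow-up $B^{+(2k-m!)/n}$ in the hypothesis buys---the $m!/n$ is the $L^1$-budget spent rounding to $D_n$---and it is where the hyperplane (equivalently, generalized scoring rule) structure is needed: it provides enough regularity near a boundary point that a region labeled $\ne a$ with room to host a nearby lattice point is available. I expect the remaining steps (the case analysis pinning $\gamma(t_0)$ to $B$, the covering-radius computation, and bookkeeping the Hamming-to-$L^1$ conversions via Lemma~\ref{lem:L1-Ham}) to be routine.
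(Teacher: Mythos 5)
Your proposal is correct and follows essentially the same route as the paper: both directions reduce to Lemma~\ref{lem:L1-Ham}, your necessity argument is just the direct (intermediate-value on a segment) form of the paper's contrapositive ``no boundary point within $2k/n$ implies $F$ is constant on the reachable set,'' and your sufficiency argument matches the paper's step of locating a point of $D_n$ within $L^1$-distance $m!/n$ of a boundary point with a different $F$-value and then invoking the second part of Lemma~\ref{lem:L1-Ham}. The only difference is that you spell out the covering-radius/rounding computation that the paper asserts in one line, which is a welcome addition rather than a deviation.
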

\begin{proof}
 For any ranking profile $\tau^n$ that the coalition can reach, we have $d_H \left( \sigma^n, \tau^n \right) \leq k$, and so by Lemma~\ref{lem:L1-Ham} we have $d_1 \left( x \left( \sigma^n \right), x \left( \tau^n \right) \right) \leq \frac{2k}{n}$. If $x \left( \sigma^n \right) \notin B^{+2k/n}$, then for every ranking profile $\tau^n$ which the coalition can reach, $x \left( \sigma^{n} \right)$ and $x \left( \tau^n \right)$ are in the same region determined by the hyperplanes, and so $F \left( x  \left( \tau^n \right) \right) = F \left( x \left( \sigma^n \right) \right)$, i.e., the coalition cannot change the outcome of the election.

 Now suppose that $x \left( \sigma^n \right) \in B^{+\left( 2k - m! \right) / n}$. Then there exists $y \in B$ such that $d_1 \left( x\left( \sigma^n \right), y \right) \leq \frac{2k - m!}{n}$. Since $y \in B$, there exists $\hat{y} \in D_n$ such that $d_1 \left( y, \hat{y} \right) \leq \frac{m!}{n}$ and $F\left( \hat{y} \right) \neq F \left( x \left( \sigma^n \right) \right)$. By the triangle inequality, $d_1 \left( x \left( \sigma^n \right), \hat{y} \right) \leq \frac{2k}{n}$, and then by the second part of Lemma~\ref{lem:L1-Ham} there exists $\hat{\tau}^n \in S_m^n$ such that $x\left( \hat{\tau}^n \right) = \hat{y}$ and $d_H \left( \sigma^n, \hat{\tau}^n \right) \leq k$. The coalition consisting of voters with indices in $I := \left\{ i \in \left[ n \right] : \sigma_i \neq \hat{\tau}_i \right\}$ can thus change the outcome of the election.
\end{proof}
\begin{corollary}\label{cor:bdry_blowup}
 If we have $n$ voters, the probability that some coalition of size $k$ can change the outcome of the election is bounded from below by $\p \left( x\left( \sigma^n \right) \in B^{+\left( 2k - m! \right)/n} \right)$ and from above by $\p \left( x\left( \sigma^n \right) \in B^{+2k/n} \right)$, where $\sigma^n$ is drawn according to the probability distribution satisfying the conditions of the setup.
\end{corollary}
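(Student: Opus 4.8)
The plan is to deduce the corollary directly from Lemma~\ref{lem:necessary}, simply by passing from the deterministic statement about a fixed ranking profile to a probabilistic statement about the randomly drawn profile $\sigma^n$. Let $A_k$ denote the event that some coalition of size $k$ can change the outcome of the election starting from $\sigma^n$. Since $\sigma^n \mapsto x(\sigma^n)$ is a deterministic map and $B^{+\alpha}$ is a fixed subset of $\Delta^{m!}$, the event $A_k$ and the events $\{x(\sigma^n) \in B^{+\alpha}\}$ are all measurable with respect to the (discrete) $\sigma$-algebra on $S_m^n$, so there are no measurability subtleties to worry about.

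First I would invoke the necessity part of Lemma~\ref{lem:necessary}: on the event $A_k$ the point $x(\sigma^n)$ must lie in $B^{+2k/n}$, i.e.\ $A_k \subseteq \{x(\sigma^n) \in B^{+2k/n}\}$, and monotonicity of $\p$ yields the claimed upper bound $\p(A_k) \le \p(x(\sigma^n) \in B^{+2k/n})$. Next I would invoke the converse (sufficiency) part of Lemma~\ref{lem:necessary}: whenever $x(\sigma^n) \in B^{+(2k-m!)/n}$ there exists a coalition of size $k$ that changes the outcome, so $\{x(\sigma^n) \in B^{+(2k-m!)/n}\} \subseteq A_k$, and again monotonicity of $\p$ gives the lower bound $\p(A_k) \ge \p(x(\sigma^n) \in B^{+(2k-m!)/n})$. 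Combining the two inequalities is precisely the corollary.

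There is essentially no obstacle here: the content is entirely contained in Lemma~\ref{lem:necessary}, and the corollary is just its probabilistic reformulation. The one point worth a sentence is that nothing about the distribution $p$ is used beyond the fact that $\sigma^n$ is a random element of $S_m^n$ with i.i.d.\ coordinates distributed according to $p$ --- the event inclusions above are purely combinatorial and hold pointwise, so any distribution over $S_m^n$ would give the same bounds. Assumptions~\ref{ass:1}--\ref{ass:4} only become relevant in the subsequent step, where one must actually estimate $\p(x(\sigma^n) \in B^{+\alpha})$.
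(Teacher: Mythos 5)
Your proposal is correct and is exactly how the paper obtains the corollary: it is stated as an immediate consequence of Lemma~\ref{lem:necessary}, with the pointwise event inclusions and monotonicity of $\p$ that you spell out. No issues.
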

\noindent{\bf Gaussian limit.} Due to the i.i.d.-ness of the votes, the multinomial random variable $x \left( \sigma^n \right)$ concentrates around its expectation, and the rescaled random variable
\[
 \tilde{x} \left( \sigma^n \right) := \sqrt{n} \left( x \left( \sigma^n \right) - \E \left( x \left( \sigma^n \right) \right) \right)
\]
converges to a normal distribution, with zero mean and specific covariance structure. For our analysis it is better to use this Gaussian picture, and thus we will reformulate the preliminaries above in this limiting setting. First, let us determine the limiting distribution.

\begin{lemma} We have $\tilde{x} \left( \sigma^n \right) \Rightarrow_{n} N \left( 0, \Sigma \right)$, where the covariance structure is given by $\Sigma = \diag \left( p \right) - p p^T$, where recall that $p$ is the distribution of a vote.
\end{lemma}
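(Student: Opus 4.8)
\emph{Proof proposal.} The plan is to recognize $\tilde{x}\left( \sigma^n \right)$ as a normalized sum of i.i.d.\ bounded random vectors and to invoke the multivariate central limit theorem. First I would introduce, for each voter $i$, the random vector $Y_i = \left( \mathbf{1}\left[ \sigma_i = \pi \right] \right)_{\pi \in S_m} \in \R^{m!}$. Since the votes are i.i.d.\ with law $p$ (Assumption~\ref{ass:3}), the $Y_i$ are i.i.d., with $\E Y_i = p$, and a direct computation of $\E\left[ \mathbf{1}\left[ \sigma_i = \pi \right] \mathbf{1}\left[ \sigma_i = \pi' \right] \right]$ (which equals $p_\pi$ when $\pi = \pi'$ and $0$ otherwise) shows that the covariance matrix of $Y_i$ is $\diag\left( p \right) - p p^T = \Sigma$. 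By the definition of $x\left( \sigma^n \right)$ we have $x\left( \sigma^n \right) = \frac{1}{n}\sum_{i=1}^n Y_i$, hence $\E\left( x\left( \sigma^n \right) \right) = p$ and therefore $\tilde{x}\left( \sigma^n \right) = \frac{1}{\sqrt{n}} \sum_{i=1}^n \left( Y_i - p \right)$.

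Next I would apply the Cram\'er--Wold device: fix an arbitrary $t \in \R^{m!}$ and consider the scalar $t^T \tilde{x}\left( \sigma^n \right) = \frac{1}{\sqrt{n}} \sum_{i=1}^n t^T \left( Y_i - p \right)$. The summands $t^T \left( Y_i - p \right)$ are i.i.d., bounded (hence square-integrable), with mean $0$ and variance $t^T \Sigma t$, so the classical one-dimensional central limit theorem gives $t^T \tilde{x}\left( \sigma^n \right) \Rightarrow N\left( 0, t^T \Sigma t \right)$ as $n \to \infty$. Since $t$ was arbitrary, the Cram\'er--Wold theorem yields $\tilde{x}\left( \sigma^n \right) \Rightarrow N\left( 0, \Sigma \right)$, which is the claim. (Equivalently, one may cite the multivariate Lindeberg--L\'evy CLT directly, the $Y_i$ being i.i.d.\ with finite second moments.)

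There is no genuine obstacle here --- the statement is a standard consequence of the CLT --- but it is worth recording one structural feature that is used implicitly later: the matrix $\Sigma$ is degenerate. Indeed $\Sigma \mathbf{1} = \diag\left( p \right)\mathbf{1} - p \left( p^T \mathbf{1} \right) = p - p = 0$, reflecting the fact that the coordinates of $x\left( \sigma^n \right)$ always sum to $1$, so $\tilde{x}\left( \sigma^n \right)$ lives on the hyperplane $\left\{ y : \sum_{\pi} y_\pi = 0 \right\}$; accordingly the limiting Gaussian $N\left( 0, \Sigma \right)$ is supported on that hyperplane, and under Assumption~\ref{ass:3} it has full rank $m! - 1$ there since every $p_\pi \geq \delta > 0$.
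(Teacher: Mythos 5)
Your proposal is correct and follows essentially the same route as the paper: both identify $\tilde{x}\left( \sigma^n \right)$ as a normalized sum of i.i.d.\ indicator vectors, compute the covariance structure to be $\diag\left( p \right) - p p^T$, and conclude by the multivariate CLT. Your version is slightly more explicit about the CLT step (via Cram\'er--Wold) and about the degeneracy of $\Sigma$, but the substance is identical.
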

\begin{proof}
It is clear that $\E \left( \tilde{x} \left( \sigma^n \right) \right) = 0$. Computing the covariance structure, we first have that $\E \left( x_{\pi}^2 \right) = \frac{1}{n^2} \sum_{i,j=1}^n \p \left( \sigma_i = \pi , \sigma_j = \pi \right) = \left( 1 - \frac{1}{n} \right) p \left( \pi \right)^2 + \frac{1}{n} p \left( \pi \right)$, 
from which we have $\Var \left( x_{\pi} \right) = \frac{1}{n} \left( p \left( \pi \right) - p \left( \pi \right)^2 \right)$ and thus $\Var \left( \tilde{x}_{\pi} \right) = p \left( \pi \right) - p \left( \pi \right)^2$. Then similarly for $\pi \neq \pi'$ we have $\E \left( x_{\pi} x_{\pi'} \right) = \frac{1}{n^2} \sum_{i,j=1}^n \p \left( \sigma_i = \pi, \sigma_j = \pi' \right) = \frac{1}{n^2} \sum_{i\neq j} p \left( \pi \right) p \left( \pi' \right) = \left( 1 - \frac{1}{n} \right) p \left( \pi \right) p \left( \pi' \right)$, 
from which we have that $\Cov \left( x_{\pi}, x_{\pi'} \right) = - \frac{1}{n} p \left( \pi \right) p \left( \pi' \right)$ and thus $\Cov \left( \tilde{x}_{\pi}, \tilde{x}_{\pi'} \right) = - p \left( \pi \right) p \left( \pi' \right)$.
\end{proof}

Note: because of the concentration of $x\left( \sigma^n \right)$ around its mean, and our assumption that for every $n$ and for every candidate $a \in \left[ m \right]$, $\p \left( f \left( \sigma^n \right) = a \right) \geq \eps$, it is necessary that for every $\alpha > 0$ and for every candidate $a \in \left[ m \right]$ there exists $y \in \Delta^{m!}$ such that $\left\| y - \E \left( x \left( \sigma_1 \right) \right) \right\|_1 \leq \alpha$ and $F \left( y \right) = a$.

Denote by $\mu$ the distribution of $N \left( 0, \Sigma \right)$ and let $\tilde{X}$ denote a random variable distributed according to $\mu$. Note that $\mu$ is a degenerate multivariate normal distribution, as the support of $\mu$ concentrates on the hyperplane $H_0$ where the coordinates sum to zero. (This is because $\sum_{\pi \in S_m} \tilde{x} \left( \sigma^n \right)_{\pi} = 0$.)

The underlying function $F : \Delta^{m!} \to \left[ m \right]$ corresponds to a function $\tilde{F} : \R^{m!}|_{H_0} \to \left[ m \right]$ in the Gaussian limit, and this function $\tilde{F}$ partitions $\R^{m!}|_{H_0}$ into $m$ parts based on the outcome of $\tilde{F}$. We denote these parts by $\left\{ \tilde{W}_a \right\}_{a\in \left[m\right]}$. We will need the following definitions and properties of boundaries, analogous to those above.

\begin{definition}[Interior and boundaries of a partition]
 We say that $\tilde{x} \in \R^{m!}|_{H_0}$ is an interior point of the partition $\left\{\tilde{W}_a \right\}_{a \in \left[m\right]}$ induced by $\tilde{F}$ if there exists $\alpha > 0$ such that for all $\tilde{y} \in \R^{m!}|_{H_0}$ for which $\left\| \tilde{x} - \tilde{y} \right\|_1 \leq \alpha$, we have $\tilde{F}\left( \tilde{x} \right) = \tilde{F} \left( \tilde{y} \right)$. Otherwise, we say that $\tilde{x} \in \R^{m!}|_{H_0}$ is on the boundary of the partition, which we denote by $\tilde{B}$.
\end{definition}

\begin{lemma}\label{lem:bdry_limit}
If the boundary $B$ comes from a hyperplane rule, i.e., $B$ is contained in the union of $\ell$ affine hyperplanes in $\Delta^{m!}$, then $\tilde{B}$ is contained in the union of $\tilde{\ell}$ hyperplanes of $\R^{m!}|_{H_0}$, where $\tilde{\ell} \leq \ell$.
\end{lemma}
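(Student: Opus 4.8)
The plan is to track the affine hyperplanes $H_1,\dots,H_\ell$ through the change of coordinates $x\mapsto \tilde x=\sqrt{n}\,(x-p)$ that defines the Gaussian picture, where $p=\E\!\left(x(\sigma_1)\right)$ is the mean vote distribution. The first observation I would record is that, by Assumption~\ref{ass:3}, $p$ lies in the relative interior of $\Delta^{m!}$ (all of its coordinates are at least $\delta>0$); consequently, for every $\tilde x\in\R^{m!}|_{H_0}$ and every sufficiently small $\eps>0$ the point $p+\eps\tilde x$ again lies in $\Delta^{m!}$ (its coordinates are $\geq\delta-\eps\|\tilde x\|_\infty>0$ and they sum to $1$ since $\tilde x\in H_0$). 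This makes the ``Gaussian limit'' of $F$ precise: for a hyperplane rule the limit $\tilde F(\tilde x):=\lim_{\eps\to 0^+}F\!\left(p+\eps\tilde x\right)$ exists, because the ray $\eps\mapsto p+\eps\tilde x$ lies in a single region cut out by $H_1,\dots,H_\ell$ once $\eps$ is small, and this is the function whose partition defines $\tilde B$.

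Next I would write each $H_j$ as $\{x:u_j\cdot x=b_j\}$ with the normal vector $u_j$ chosen inside the direction space $H_0$ (so $u_j\perp\mathbf 1$ and $u_j\neq 0$), and split $\{1,\dots,\ell\}$ into $J:=\{j:p\in H_j\}=\{j:u_j\cdot p=b_j\}$ and its complement. Along the ray $p+\eps\tilde x$: for $j\notin J$ the quantity $u_j\cdot(p+\eps\tilde x)-b_j$ has the constant sign $\sgn(u_j\cdot p-b_j)\neq 0$ for all small $\eps$, independently of $\tilde x$; for $j\in J$ we have $u_j\cdot(p+\eps\tilde x)-b_j=\eps\,(u_j\cdot\tilde x)$, whose sign equals $\sgn(u_j\cdot\tilde x)$ for every $\eps>0$. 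Hence $\tilde F(\tilde x)$ is determined solely by the sign pattern $\left(\sgn(u_j\cdot\tilde x)\right)_{j\in J}$; in other words $\tilde F$ is itself a hyperplane rule on $\R^{m!}|_{H_0}$, induced by the hyperplanes $\tilde H_j:=\{v\in H_0:u_j\cdot v=0\}$ for $j\in J$, each of which is a genuine hyperplane of $\R^{m!}|_{H_0}$ since $u_j\neq 0$.

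It then remains to check the inclusion $\tilde B\subseteq\bigcup_{j\in J}\tilde H_j$ straight from the definition of the boundary: if $\tilde x$ lies in none of the $\tilde H_j$ then $u_j\cdot\tilde x\neq 0$ for all $j\in J$, so by continuity the sign pattern $\left(\sgn(u_j\cdot\tilde y)\right)_{j\in J}$ is constant on a neighborhood of $\tilde x$, and combined with the fact that the indices $j\notin J$ contribute nothing this shows $\tilde F$ is constant on that neighborhood, so $\tilde x$ is an interior point. Taking $\tilde\ell$ to be the number of distinct hyperplanes among $\{\tilde H_j\}_{j\in J}$ gives $\tilde\ell\leq|J|\leq\ell$, as required. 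I expect the only genuine subtlety to be making the Gaussian-limit function $\tilde F$ precise and justifying the one-sided ray limit above --- which is exactly the place where the relative-interior property of $p$ (from Assumption~\ref{ass:3}) is used; the rest is elementary linear algebra.
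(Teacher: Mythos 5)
Your proposal is correct and follows essentially the same route as the paper: the paper's proof also classifies each affine hyperplane $H_j$ according to whether it contains the mean $p = \E\left(x\left(\sigma_1\right)\right)$, observing that those through $p$ become hyperplanes of $\R^{m!}|_{H_0}$ through the origin (unmoved by the $\sqrt{n}$ rescaling) while the others are pushed off to infinity and vanish in the limit. Your write-up is simply a more careful version of the same argument, making the limiting function $\tilde F$ precise via the one-sided ray limit and verifying the boundary inclusion explicitly, where the paper leaves these steps as a sketch.
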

\begin{proof}
 Two things can happen to an affine hyperplane $H$ of $\Delta^{m!}$ when we take the Gaussian limit: (1) if $\E \left( x \left( \pi \right) \right) \in H$, then translation by $\E \left( x \left( \pi \right) \right)$ takes $H$ into a hyperplane $\tilde{H}$ of $\R^{m!}|_{H_0}$, and since $\tilde{H}$ goes through the origin, scaling (in particular by $\sqrt{n}$) does not move this hyperplane; (2) if $\E \left( x \left( \pi \right) \right) \notin H$, then translation by $\E \left( x \left( \pi \right) \right)$ takes $H$ into an affine hyperplane $\tilde{H}$ of $\R^{m!}|_{H_0}$ that does not go through the origin, and then scaling by $\sqrt{n}$ moves $\tilde{H}$ to an affine hyperplane of $\R^{m!}|_{H_0}$ whose $L^2$ distance from the origin is proportional to $\sqrt{n}$, so in the $n \to \infty$ limit this affine hyperplane ``vanishes''.
\end{proof}

\begin{definition}[Blowup of boundary]
 For $\alpha > 0$, we define the blowup of the boundary $\tilde{B}$ by $\alpha$ to be
\[
 \tilde{B}^{+\alpha} = \left\{ \tilde{y} \in \R^{m!}|_{H_0} : \exists \tilde{x} \in \tilde{B} \text{ such that } \left\| \tilde{x} - \tilde{y} \right\|_1 \leq \alpha \right\}.
\]
\end{definition}

Let us focus specifically on a coalition of size $c \sqrt{n}$ for some (small) constant $c$. Corollary~\ref{cor:bdry_blowup} implies the following.

\begin{corollary}\label{cor:limit}
 For hyperplane rules the limit of the probability that in an election with $n$ voters some coalition of size $c \sqrt{n}$ can change the outcome of the election is $\mu \left( \tilde{X} \in \tilde{B}^{+2c} \right)$.
\end{corollary}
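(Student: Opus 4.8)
The plan is to transfer Corollary~\ref{cor:bdry_blowup}, which is a finite-$n$ statement about the multinomial point $x(\sigma^n)$ and the $L^1$-blowup $B^{+\alpha}$ of the boundary $B$ in the simplex, to the Gaussian limit by means of the weak convergence $\tilde x(\sigma^n) \Rightarrow_n \mu$. First I would rewrite the events in Corollary~\ref{cor:bdry_blowup} in terms of the rescaled variable: since $\tilde x(\sigma^n) = \sqrt n \left( x(\sigma^n) - \E x(\sigma_1) \right)$, the event $\{ x(\sigma^n) \in B^{+\alpha} \}$ becomes $\{ \tilde x(\sigma^n) \in \sqrt n \left( B^{+\alpha} - \E x(\sigma_1) \right) \}$. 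With $\alpha = 2k/n$ and $k = c\sqrt n$ we have $\sqrt n\,\alpha = 2c$, and the rescaled, recentered blowup $\sqrt n \left( B^{+2c/\sqrt n} - \E x(\sigma_1) \right)$ should converge (in the appropriate sense) to $\tilde B^{+2c}$: this is exactly the content of the geometric picture in the proof of Lemma~\ref{lem:bdry_limit}, namely that affine hyperplanes through $\E x(\sigma_1)$ become linear hyperplanes of $\R^{m!}|_{H_0}$ fixed by scaling, while those not through $\E x(\sigma_1)$ escape to infinity. The second (lower-bound) half of Corollary~\ref{cor:bdry_blowup}, with $\alpha = (2k - m!)/n$, rescales to $\tilde B^{+(2c - m!/\sqrt n)}$, and the $m!/\sqrt n$ correction vanishes as $n\to\infty$; so the upper and lower bounds pinch together in the limit.

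The key steps, in order, are: (i) identify the limit set --- show that for each fixed $c$, $\sqrt n \left( B^{+2c/\sqrt n} - \E x(\sigma_1) \right) \to \tilde B^{+2c}$ in the sense that it is eventually sandwiched between $\tilde B^{+(2c-\eps)}$ and $\tilde B^{+(2c+\eps)}$ for any $\eps>0$ and all large $n$ (using that only the hyperplanes through $\E x(\sigma_1)$ survive, and that for those, $L^1$-blowups scale linearly); (ii) apply the portmanteau theorem --- since $\tilde B^{+2c}$ is a finite union of slabs around linear hyperplanes, its topological boundary is a $\mu$-null set (here one uses that $\mu$, though degenerate, is supported on $H_0$ and is absolutely continuous with respect to Lebesgue measure on $H_0$, so any hyperplane of $\R^{m!}|_{H_0}$ and any parallel translate of it is $\mu$-null, hence so is the boundary of a slab), so $\mu$ is a continuity set and weak convergence gives $\p(\tilde x(\sigma^n) \in \tilde B^{+2c}) \to \mu(\tilde X \in \tilde B^{+2c})$; (iii) combine (i), (ii) and the sandwich from Corollary~\ref{cor:bdry_blowup} with a further $\eps\downarrow 0$ argument, invoking continuity of $c\mapsto \mu(\tilde X\in\tilde B^{+2c})$ (monotone in $c$, with $\mu(\partial$-slab$)=0$) to conclude that both the liminf and the limsup of the finite-$n$ probability equal $\mu(\tilde X \in \tilde B^{+2c})$.

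The main obstacle I expect is step (i): making precise, uniformly in $n$, that the finite-$n$ blowup $B^{+2c/\sqrt n}$ --- which involves \emph{all} $\ell$ affine hyperplanes cutting $\Delta^{m!}$, including those far from $\E x(\sigma_1)$ --- really does rescale to $\tilde B^{+2c}$ and not something larger. The point is that for a hyperplane $H$ with $\E x(\sigma_1)\notin H$, the set $\sqrt n\left(H^{+2c/\sqrt n} - \E x(\sigma_1)\right)$ is a slab of fixed width $2c$ about an affine hyperplane whose distance from the origin grows like $\sqrt n$, so it is disjoint from any fixed bounded region for large $n$; combined with the concentration of $\tilde x(\sigma^n)$ (tightness from the CLT), the contribution of such hyperplanes to the probability is negligible. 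One must also be slightly careful that $\tilde B$ as defined via $\tilde F$ coincides with the limit of the rescaled $B$ --- but this is precisely Lemma~\ref{lem:bdry_limit} together with the observation, noted after the covariance computation, that for every candidate $a$ there are points arbitrarily close to $\E x(\sigma_1)$ with $F = a$, so no candidate's region degenerates in the limit and the surviving hyperplanes genuinely separate all the limiting regions $\tilde W_a$. Once step (i) is in hand, steps (ii) and (iii) are routine portmanteau-and-sandwich arguments.
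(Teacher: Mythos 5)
Your proposal is correct and follows exactly the route the paper intends: the paper states the corollary as an immediate consequence of Corollary~\ref{cor:bdry_blowup} together with the Gaussian limit discussion (the CLT for $\tilde{x}(\sigma^n)$ and Lemma~\ref{lem:bdry_limit}), and your steps (i)--(iii) are precisely the details being suppressed there, including the vanishing $m!/\sqrt{n}$ correction and the fact that hyperplanes missing $\E x(\sigma_1)$ escape to infinity. The only point worth flagging is that the $\mu$-nullity of $\partial\bigl(\tilde{B}^{+2c}\bigr)$ is cleanest via the homogeneity of $d_1(\cdot,\tilde{B})$ (the scaling argument of Section~\ref{sec:smooth_some}) rather than via ``finite union of slabs,'' since $\tilde{B}$ may be a proper subset of the union of the hyperplanes.
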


The following claim, together with Corollary \ref{cor:limit}, tells us that for hyperplane rules a coalition of size $c \sqrt{n}$ can change the outcome of the election with only small probability, given that $c$ is sufficiently small, proving Part~\ref{small} of Theorem~\ref{thm:main2}.

\begin{claim}\label{claim:hyperplanes_limit}
 Suppose our SCF is a hyperplane rule, and in particular let $\left\{ \tilde{H}_i \right\}_{i=1}^{M}$ be a collection of hyperplanes in $\R^{m!}|_{H_{0}}$ such that $\tilde{B} \subseteq \bigcup_{i=1}^{M} \tilde{H}_i$. Then
\[
\mu \left( \tilde{X} \in \tilde{B}^{+c} \right) \leq \sqrt{\frac{2}{\pi}} \frac{Mc}{\sqrt{\delta}}.
\]
\end{claim}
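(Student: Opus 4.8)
The plan is to bound the Gaussian measure of the $L^1$-blowup of each hyperplane separately and then apply a union bound. First I would write $\tilde{B}^{+c} \subseteq \bigcup_{i=1}^M \tilde{H}_i^{+c}$, where $\tilde{H}_i^{+c} = \{\tilde y : \exists \tilde x \in \tilde{H}_i, \|\tilde x - \tilde y\|_1 \le c\}$, so that $\mu(\tilde X \in \tilde{B}^{+c}) \le \sum_{i=1}^M \mu(\tilde X \in \tilde{H}_i^{+c})$. It then suffices to show that for a single hyperplane $\tilde{H}$ in $\R^{m!}|_{H_0}$ we have $\mu(\tilde X \in \tilde{H}^{+c}) \le \sqrt{2/\pi}\, c/\sqrt{\delta}$.

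Next I would pass from the $L^1$ neighborhood to an $L^2$ (Euclidean) slab, which is the natural object for a Gaussian computation. Since $\|\cdot\|_2 \le \|\cdot\|_1$, the set $\tilde{H}^{+c}$ (an $L^1$-blowup) is contained in the Euclidean slab of half-width $c$ around $\tilde{H}$, i.e.\ $\{\tilde y : \operatorname{dist}_2(\tilde y, \tilde{H}) \le c\}$. Now I would choose a unit normal vector $u$ to $\tilde{H}$ inside the subspace $H_0$; then $\operatorname{dist}_2(\tilde y, \tilde{H}) = |u \cdot \tilde y - a|$ for the appropriate offset $a$ (and $a = 0$ since $\tilde{H}$ passes through the origin by Lemma~\ref{lem:bdry_limit}, though this doesn't matter for the bound). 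The random variable $Z := u \cdot \tilde X$ is a one-dimensional Gaussian with mean zero and variance $u^T \Sigma u$, so $\mu(\tilde X \in \tilde{H}^{+c}) \le \p(|Z| \le c) = \p(|Z - a| \le c)$, and the density of any Gaussian with variance $s^2$ is bounded by $1/(s\sqrt{2\pi})$, giving $\p(|Z - a| \le c) \le 2c/(s\sqrt{2\pi}) = \sqrt{2/\pi}\, c/s$ where $s^2 = u^T\Sigma u$.

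The remaining step — and the only place where anything specific to the problem enters — is to show that the variance $u^T \Sigma u$ is bounded below by $\delta$ for every unit vector $u$ in the subspace $H_0 = \{v : \sum_\pi v_\pi = 0\}$. Using $\Sigma = \diag(p) - pp^T$, we compute $u^T\Sigma u = \sum_\pi p(\pi) u_\pi^2 - (\sum_\pi p(\pi) u_\pi)^2 \ge \sum_\pi p(\pi) u_\pi^2 - 0$, but that isn't quite enough on its own; instead I would use that $u \in H_0$ means $\sum_\pi u_\pi = 0$, so $\sum_\pi p(\pi) u_\pi = \sum_\pi (p(\pi) - t) u_\pi$ for any constant $t$, and more directly: since $\sum_\pi u_\pi = 0$, for any $t$ we have $u^T \Sigma u = \sum_\pi p(\pi) u_\pi^2 - (\sum_\pi p(\pi) u_\pi)^2$; picking $t = \min_\pi p(\pi) \ge \delta$ and writing $p(\pi) = t + (p(\pi) - t)$ with $p(\pi) - t \ge 0$, one sees $\sum_\pi p(\pi) u_\pi^2 \ge t \sum_\pi u_\pi^2 = t\|u\|_2^2 = t \ge \delta$, while Cauchy--Schwarz type manipulation shows $(\sum_\pi p(\pi) u_\pi)^2 \le \sum_\pi p(\pi) u_\pi^2 \cdot \sum_\pi p(\pi) - (\text{correction})$; cleaner is to note $u^T \Sigma u = \Var(u \cdot Y)$ where $Y$ is the one-hot encoding of a single vote, and $\Var(u\cdot Y) = \E[(u \cdot Y - \E u\cdot Y)^2]$; since $u\cdot Y = u_\pi$ with probability $p(\pi)$ and $\E u \cdot Y = 0$ would require more care. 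The cleanest route: $u^T\Sigma u = \sum_\pi p(\pi)(u_\pi - \bar u_p)^2$ where $\bar u_p = \sum_\pi p(\pi) u_\pi$; since $\sum_\pi u_\pi = 0$ and each $p(\pi) \ge \delta$, expanding $\sum_\pi p(\pi)(u_\pi - \bar u_p)^2 \ge \delta \sum_\pi (u_\pi - \bar u_p)^2 = \delta(\|u\|_2^2 + m! \bar u_p^2 - 2\bar u_p \sum_\pi u_\pi) = \delta(1 + m!\bar u_p^2) \ge \delta$. So $s = \sqrt{u^T \Sigma u} \ge \sqrt{\delta}$, which yields the claimed bound after the union bound. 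The main obstacle is this variance lower bound; once the reduction to a one-dimensional Gaussian slab is in place everything else is a routine density bound, so I would make sure the $L^1$-to-$L^2$ comparison and the choice of normal vector within $H_0$ are stated carefully.
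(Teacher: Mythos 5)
Your proposal is correct, and its overall architecture is identical to the paper's: union bound over the hyperplanes, the $L^1$-to-$L^2$ comparison to reduce each blowup to a Euclidean slab $\left\{ \left| u \cdot \tilde{x} \right| \leq c \right\}$, the density bound $\p\left( \left| Z \right| \leq c \right) \leq 2c / \left( s \sqrt{2\pi} \right)$ for a one-dimensional Gaussian, and finally the lower bound $u^T \Sigma u \geq \delta$ for unit $u$ with $\sum_\pi u_\pi = 0$. The one place you genuinely diverge is in that last step: the paper identifies $\min_{u \perp \mathbf{1}, \left\| u \right\|=1} u^T \Sigma u$ with the second-smallest eigenvalue $\lambda_{m!-1}\left( \Sigma \right)$ via the variational characterization (using that $\mathbf{1}$ spans the kernel of $\Sigma$) and then invokes Weyl's inequalities for $\Sigma = \diag\left( p \right) - pp^T$ to get $\lambda_{m!-1}\left( \Sigma \right) \geq \delta + 0$. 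You instead compute directly: $u^T \Sigma u = \sum_\pi p\left( \pi \right) \left( u_\pi - \bar{u}_p \right)^2 \geq \delta \sum_\pi \left( u_\pi - \bar{u}_p \right)^2 = \delta \left( 1 + m! \, \bar{u}_p^2 \right) \geq \delta$, where the cross term vanishes because $\sum_\pi u_\pi = 0$. This computation is correct and is more elementary and self-contained than the eigenvalue argument; the paper's route buys nothing extra here beyond stating the bound in spectral language. Your write-up of this step meanders through two abandoned attempts before landing on the clean one, so in a final version you should present only the last computation.
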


\begin{proof}
 By our condition and a union bound we have
\[
 \mu \left( \tilde{X} \in \tilde{B}^{+c} \right) \leq \sum_{i=1}^{M} \mu \left( \tilde{X} \in \tilde{H}_i^{+c} \right).
\]
For a hyperplane $\tilde{H}$ in $\R^{m!}|_{H_0}$, denote (one of) the corresponding unit normal vector(s) (in the hyperplane $H_0$) by $u$. Then 
\[
\tilde{H} = \left\{ \tilde{x} \in \R^{m!}|_{H_0} : u \cdot \tilde{x} = 0 \right\}
\]
and since $L^1$ distance is always greater than $L^2$ distance, we have
\[
 \tilde{H}^{+c} \subseteq \left\{ \tilde{x} \in \R^{m!}|_{H_0} : \exists \tilde{y} \in \tilde{H} \text{ such that } \left\| \tilde{x} - \tilde{y} \right\|_2 \leq c \right\} = \left\{ \tilde{x} \in \R^{m!}|_{H_0} : \left| u \cdot \tilde{x} \right| \leq c \right\}.
\]
Since $\tilde{X}$ is a multidimensional Gaussian r.v., $u \cdot \tilde{X}$ is a one-dimensional Gaussian r.v.\ (which is centered). Therefore
\[
 \mu \left( \tilde{X} \in \tilde{H}^{+c} \right) \leq \mu \left( u \cdot \tilde{X} \in \left[ -c, c \right] \right) \leq \frac{2c}{\sqrt{2\pi \Var \left( u \cdot \tilde{X} \right)}}.
\]
We have that
\[
 \Var \left( u \cdot \tilde{X} \right) = \E \left( u \cdot \tilde{X} \right)^2 = \E \left( u^T \tilde{X} \tilde{X}^T u \right) = u^T \Sigma u,
\]
and so all that remains to show is that
\[
 \min_{u : \left\| u \right\| = 1, u \perp \mathbf{1}} u^T \Sigma u \geq \delta,
\]
where $\mathbf{1}$ is the $m!$-dimensional vector having 1 in every coordinate.

Let $\lambda_1 \left( \Sigma \right) \geq \lambda_2 \left( \Sigma \right) \geq \dots \geq \lambda_{m!} \left( \Sigma \right)$ denote the eigenvalues of $\Sigma$. Since $\Sigma$ is positive semidefinite, all eigenvalues are nonnegative. We know that 0 is an eigenvalue of $\Sigma$ (the corresponding eigenvector is $\mathbf{1}$), so $\lambda_{m!} \left( \Sigma \right) = 0$. By the variational characterization of eigenvalues we have
\[
 \min_{u : \left\| u \right\| = 1, u \perp \mathbf{1}} u^T \Sigma u = \lambda_{m!-1} \left( \Sigma \right),
\]
and so we need to show that $\lambda_{m!-1} \left( \Sigma \right) \geq \delta$. To do this we use Weyl's inequalities.
\begin{lemma}[Weyl's inequalities]
 For an $n \times n$ matrix $M$ let $\lambda_1 \left( M \right) \geq \lambda_2 \left( M \right) \geq \dots \geq \lambda_n \left( M \right)$ denote its eigenvalues. If $A$ and $C$ are $n \times n$ symmetric matrices then
  \begin{align*}
    \lambda_j \left( A + C \right) &\leq \lambda_i \left( A \right) + \lambda_{j-i+1} \left( C \right) \qquad \text{ if } i \leq j,\\
    \lambda_j \left( A + C \right) &\geq \lambda_i \left( A \right) + \lambda_{j-i+n} \left( C \right) \qquad \text{ if } i \geq j.
  \end{align*} 
\end{lemma}
We use Weyl's inequality for $A = \diag \left( p \right)$ and $C = - pp^T$. The eigenvalues of $A$ are $\left\{ p \left( \pi \right) \right\}_{\pi \in S_m}$, all of which are no less than $\delta$. Since $C$ has rank 1, all its eigenvalues but one are zero, and the single nonzero eigenvalue is $\lambda_{m!} \left( C \right) = - p^T p$. Since $\Sigma = \diag \left( p \right) - p p^T = A + C$, Weyl's inequality tells us that
\[
 \lambda_{m!-1} \left( \Sigma \right) \geq \lambda_{m!} \left( \diag \left( p \right) \right) + \lambda_{m!-1} \left( - p p^T \right) \geq \delta + 0 = \delta. \qedhere
\]
\end{proof}

This implies that we have a lower bound of $\Omega \left( \sqrt{n} \right)$ for the size of the coalition needed in order to change the outcome of the election for hyperplane rules. As mentioned before, most commonly occurring SCFs are in this class of rules: see Appendix~\ref{sec:app-examples} for many examples.

\subsubsection{``Almost'' hyperplane rules}\label{sec:almost} 

Furthermore, the Gaussian limiting setting above is not sensitive to small changes to the voting rule for finite $n$. Consequently, for SCFs that are ``almost'' hyperplane rules (in a sense we make precise below), the same conclusion holds: a coalition of size $\Omega \left( \sqrt{n} \right)$ is needed in order to be able to change the outcome of the election with non-negligible probability. In particular, the same result holds for SCFs with arbitrary tie-breaking rules for ranking profiles which lie on one of the hyperplanes (e.g., the tie-breaking rule can depend on the number of voters $n$). 

\begin{definition}[``Almost'' hyperplane rules]
 Fix a finite set of affine hyperplanes of the simplex $\Delta^{m!}$: $H_1, \dots, H_\ell$. These partition the simplex into finitely many regions. Let $F : \Delta^{m!} \to \left[m \right]$ be a function which is constant on each such region, and let $B$ denote the induced boundary. Then the sequence of SCFs $\left\{ f_n \right\}_{n \geq 1}$, $f_n : S_m^n \to \left[ m \right]$, is called an ``almost'' hyperplane rule if for every $\sigma^n$ such that $x \left( \sigma^n \right) \notin B^{+o \left( 1 / \sqrt{n} \right)}$, we have
\[
 f_n \left( \sigma^n \right) = F \left( x \left( \sigma^n \right) \right).
\]
This SCF is called an ``almost'' hyperplane rule induced by the affine hyperplanes $H_1, \dots, H_{\ell}$ and the function~$F$.
\end{definition}

\begin{lemma}
Suppose the sequence of SCFs $\left\{ f_n \right\}_{n \geq 1}$, $f_n : S_m^n \to \left[ m \right]$, is an ``almost'' hyperplane rule defined by $\ell$ hyperplanes. Then in the Gaussian limiting setting the boundary $\tilde{B}$ is contained in the union of $\tilde{\ell}$ hyperplanes of $\R^{m!}|_{H_0}$, where $\tilde{\ell} \leq \ell$.
\end{lemma}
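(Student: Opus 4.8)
The plan is to imitate the proof of Lemma~\ref{lem:bdry_limit}, adding a single observation that absorbs the $o\!\left(1/\sqrt{n}\right)$ slack in the definition of an ``almost'' hyperplane rule. Let $F : \Delta^{m!} \to \left[m\right]$ be the underlying function, induced by the affine hyperplanes $H_1, \dots, H_{\ell}$, and let $B$ be its boundary, so $B \subseteq \bigcup_{i=1}^{\ell} H_i$. By Assumption~\ref{ass:3} the mean $\E\left( x\left( \sigma_1 \right) \right) = p$ has every coordinate at least $\delta > 0$, hence lies in the relative interior of $\Delta^{m!}$. As in Lemma~\ref{lem:bdry_limit}, split the hyperplanes: those $H_i$ with $\E\left( x\left( \sigma_1 \right) \right) \in H_i$ become, after translating by $\E\left( x\left( \sigma_1 \right) \right)$ and rescaling by $\sqrt{n}$, genuine hyperplanes $\tilde{H}_i$ of $\R^{m!}|_{H_0}$ through the origin (unaffected by the scaling), while those $H_j$ with $\E\left( x\left( \sigma_1 \right) \right) \notin H_j$ are translated to affine hyperplanes whose $L^2$-distance from the origin is bounded below by a positive constant, so after scaling by $\sqrt{n}$ they recede to distance $\Theta\!\left( \sqrt{n} \right)$ and vanish in the limit. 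Let $\tilde{\ell} \leq \ell$ be the number of surviving hyperplanes $\tilde{H}_1, \dots, \tilde{H}_{\tilde{\ell}}$; these are the candidates for containing $\tilde{B}$.

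Next I would fix $\tilde{x} \in \R^{m!}|_{H_0}$ lying in none of the $\tilde{H}_i$, and follow its preimage $x\left( \sigma^n \right) := \E\left( x\left( \sigma_1 \right) \right) + \tilde{x}/\sqrt{n} \in \Delta^{m!}$. For each surviving hyperplane, translation invariance and homogeneity of $d_1$ give $d_1\left( x\left( \sigma^n \right), H_i \right) = \frac{1}{\sqrt{n}} d_1\left( \tilde{x}, \tilde{H}_i \right) = c_i / \sqrt{n}$ with $c_i > 0$ a constant depending on $\tilde{x}$; for each vanishing hyperplane, $d_1\left( x\left( \sigma^n \right), H_j \right) \geq d_1\left( \E\left( x\left( \sigma_1 \right) \right), H_j \right) - \left\| \tilde{x} \right\|_1 / \sqrt{n}$, which stays above a positive constant once $n$ is large. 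Hence $d_1\left( x\left( \sigma^n \right), B \right) \geq \left( \min_i c_i \right)/\sqrt{n}$ for all large $n$, while the modification region $B^{+o\left(1/\sqrt{n}\right)}$ has width $o\!\left(1/\sqrt{n}\right) < \left( \min_i c_i \right)/\sqrt{n}$ eventually; so $x\left( \sigma^n \right) \notin B^{+o\left(1/\sqrt{n}\right)}$ and therefore $f_n\left( \sigma^n \right) = F\left( x\left( \sigma^n \right) \right)$ by the definition of an ``almost'' hyperplane rule. Since $x\left( \sigma^n \right)$ lies on the fixed side $\sgn\left( u_i \cdot \tilde{x} \right)$ of each surviving hyperplane and on the $\E\left( x\left( \sigma_1 \right) \right)$-side of each vanishing hyperplane for all large $n$, the value $F\left( x\left( \sigma^n \right) \right)$ stabilizes; define $\tilde{F}\left( \tilde{x} \right)$ to be this limiting value (this is exactly the $\tilde{F}$ associated with the genuine hyperplane rule with underlying function $F$).

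Finally, the construction makes $\tilde{F}$ locally constant at every $\tilde{x}$ outside $\bigcup_{i=1}^{\tilde{\ell}} \tilde{H}_i$: a sufficiently small $L^1$-ball around $\tilde{x}$ avoids all $\tilde{H}_i$ and reproduces the same sign pattern, hence the same stabilized value. So every such $\tilde{x}$ is an interior point of the partition induced by $\tilde{F}$, and therefore $\tilde{B} \subseteq \bigcup_{i=1}^{\tilde{\ell}} \tilde{H}_i$ with $\tilde{\ell} \leq \ell$, proving the lemma. The one place that needs care is the middle step: making rigorous the comparison between the $\Theta\!\left(1/\sqrt{n}\right)$ distance from $x\left( \sigma^n \right)$ to the surviving hyperplanes and the $o\!\left(1/\sqrt{n}\right)$ width of the modification region, and checking that the vanishing hyperplanes stay a fixed distance away once $n$ is large. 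These are routine but must be spelled out, since the whole point of ``almost'' hyperplane rules is that they may behave arbitrarily inside $B^{+o\left(1/\sqrt{n}\right)}$, so the argument must show that such points are irrelevant in the Gaussian limit.
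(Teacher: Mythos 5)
Your proposal is correct and follows essentially the same route as the paper: the paper's own (two-sentence) proof simply observes that the $o\left(1/\sqrt{n}\right)$ blowup of $B$ disappears after rescaling by $\sqrt{n}$, reducing to the earlier lemma on the Gaussian limit of a genuine hyperplane rule. Your write-up fills in the quantitative comparison between the $\Theta\left(1/\sqrt{n}\right)$ distance to the surviving hyperplanes and the $o\left(1/\sqrt{n}\right)$ modification region that the paper leaves implicit, which is a legitimate elaboration rather than a different argument.
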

\begin{proof}
 For finite $n$, the induced boundary of $f_n$ in the simplex $\Delta^{m!}$ is contained in $B^{+ o \left( 1 / \sqrt{n} \right)}$, by definition. Since in the Gaussian limit we scale by $\sqrt{n}$, the blowup by $o \left( 1 / \sqrt{n} \right)$ of the boundary $B$ disappears in the limit, and hence we are back to the situation of Lemma \ref{lem:bdry_limit}. Consequently, the affine hyperplanes corresponding to our ``almost'' hyperplane rule either ``disappear to infinity'' or become hyperplanes of $\R^{m!}|_{H_0}$. 
\end{proof}

\begin{corollary}
 Corollary \ref{cor:limit} and Claim \ref{claim:hyperplanes_limit} hold for ``almost'' hyperplane rules as well.
\end{corollary}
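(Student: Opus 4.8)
The plan is to piggyback on the hyperplane-rule case already proved, using the lemma immediately above: for an ``almost'' hyperplane rule defined by $H_1,\dots,H_\ell$ and $F$, the limiting boundary $\tilde B$ still lies in a union of $\tilde\ell\le\ell$ hyperplanes of $\R^{m!}|_{H_0}$, because the ``bad region'' $B^{+o(1/\sqrt n)}$ on which $f_n$ may disagree with $\sigma^n\mapsto F(x(\sigma^n))$ collapses to measure zero after rescaling by $\sqrt n$. Given this, Claim~\ref{claim:hyperplanes_limit} needs nothing new: its proof uses only that $\tilde B\subseteq\bigcup_{i=1}^M\tilde H_i$ for hyperplanes $\tilde H_i$ of $\R^{m!}|_{H_0}$ and that $\min_{\|u\|=1,\,u\perp\mathbf 1}u^T\Sigma u\ge\delta$; the former now holds with $M=\tilde\ell$ and the latter is untouched, so the bound $\mu(\tilde X\in\tilde B^{+c})\le\sqrt{2/\pi}\,Mc/\sqrt\delta$ holds verbatim.

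For Corollary~\ref{cor:limit} I would re-run the chain Lemma~\ref{lem:necessary} $\to$ Corollary~\ref{cor:bdry_blowup} $\to$ Corollary~\ref{cor:limit} while carrying the $o(1/\sqrt n)$ slack. Write $p_n(c)$ for the probability that some coalition of size $c\sqrt n$ changes the outcome, and fix $\beta_n=o(1/\sqrt n)$, with $\beta_n\ge m!/n$ without loss of generality, such that $f_n(\sigma^n)=F(x(\sigma^n))$ whenever $x(\sigma^n)\notin B^{+\beta_n}$. Upper bound: if a size-$c\sqrt n$ coalition moves $\sigma^n$ to $\tau^n$ with $f_n(\sigma^n)\ne f_n(\tau^n)$, then $d_1(x(\sigma^n),x(\tau^n))\le 2c/\sqrt n$ by Lemma~\ref{lem:L1-Ham}; if both endpoints avoid $B^{+\beta_n}$ then $F(x(\sigma^n))\ne F(x(\tau^n))$, so the straight segment between them passes through a non-interior point, i.e.\ through $B$, forcing $x(\sigma^n)\in B^{+2c/\sqrt n}$; otherwise one endpoint lies in $B^{+\beta_n}$, forcing $x(\sigma^n)\in B^{+(2c/\sqrt n+\beta_n)}$; hence $p_n(c)\le\p(x(\sigma^n)\in B^{+(2c+\sqrt n\beta_n)/\sqrt n})$. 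Lower bound: pick $\gamma_n\to 0$ with $\gamma_n/\sqrt n\ge 2m!/n+\beta_n$; if $x(\sigma^n)\in B^{+(2c-\gamma_n)/\sqrt n}\setminus B^{+\beta_n}$, then the construction in the proof of Lemma~\ref{lem:necessary}, followed by a further displacement of order $\beta_n+1/n=o(1/\sqrt n)$ that moves the target outside $B^{+\beta_n}$ without changing the fact that its $F$-value differs from $F(x(\sigma^n))$, yields a profile $\hat\tau^n$ reachable by a coalition of size $c\sqrt n$ with $x(\hat\tau^n)\notin B^{+\beta_n}$ and $F(x(\hat\tau^n))\ne F(x(\sigma^n))$; since $x(\sigma^n)\notin B^{+\beta_n}$ as well, $f_n(\hat\tau^n)=F(x(\hat\tau^n))\ne F(x(\sigma^n))=f_n(\sigma^n)$, so $p_n(c)\ge\p(x(\sigma^n)\in B^{+(2c-\gamma_n)/\sqrt n})-\p(x(\sigma^n)\in B^{+\beta_n})$.

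It remains to pass to the limit, which is done exactly as for genuine hyperplane rules: rescaling by $\tilde x(\sigma^n)=\sqrt n(x(\sigma^n)-\E x(\sigma_1))$, the weak convergence of $\tilde x(\sigma^n)$ to $\tilde X$, the lemma above (in place of Lemma~\ref{lem:bdry_limit}), and the fact that $\mu$ assigns zero mass to every hyperplane of $\R^{m!}|_{H_0}$ (so $\mu(\partial\tilde B^{+\alpha})=0$) together give $\p(x(\sigma^n)\in B^{+\alpha_n/\sqrt n})\to\mu(\tilde X\in\tilde B^{+\alpha})$ whenever $\alpha_n\to\alpha$, using also that $\alpha\mapsto\mu(\tilde X\in\tilde B^{+\alpha})$ is continuous (again by the Gaussian-slab estimate of Claim~\ref{claim:hyperplanes_limit}) and that $\mu(\tilde X\in\tilde B^{+0})=\mu(\tilde X\in\tilde B)=0$. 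Since $\sqrt n\beta_n\to 0$ and $\gamma_n\to 0$, both the upper and lower bounds on $p_n(c)$ converge to $\mu(\tilde X\in\tilde B^{+2c})$, and squeezing gives $\lim_n p_n(c)=\mu(\tilde X\in\tilde B^{+2c})$ --- that is, Corollary~\ref{cor:limit} for ``almost'' hyperplane rules.

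The crux, and the one place that genuinely uses the ``almost'' hypothesis, is the lower bound: one can only certify that a manipulation has succeeded at profiles lying \emph{outside} the bad region $B^{+\beta_n}$, where $f_n$ is pinned to $F\circ x$, so a coalition must be able to reach such a profile on the far side of $B$. The $o(1/\sqrt n)$ in the definition supplies precisely the extra maneuvering room for this --- negligible against the $1/\sqrt n$ scale of the relevant blow-ups but comfortably exceeding the $o(1/\sqrt n)$ thickness of the bad region --- and this is also why arbitrary, possibly $n$-dependent, tie-breaking on the hyperplanes is harmless.
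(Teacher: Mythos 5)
Your proposal takes essentially the same route as the paper: the paper offers no separate proof of this corollary, treating it as immediate from the preceding lemma that the rescaled boundary $\tilde B$ of an ``almost'' hyperplane rule is still contained in $\tilde\ell \le \ell$ hyperplanes of $\R^{m!}|_{H_0}$, and your observation that Claim~\ref{claim:hyperplanes_limit} then applies verbatim is exactly the intended argument. Your re-derivation of Corollary~\ref{cor:limit} carrying the $o(1/\sqrt{n})$ slack is more detailed than anything in the paper; the upper bound and the passage to the limit are sound.

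One caution about the step you yourself flag as the crux. The claim that one can always displace the target by $O(\beta_n + 1/n)$ so as to land in $D_n$, outside $B^{+\beta_n}$, while keeping an $F$-value different from $F(x(\sigma^n))$ is asserted rather than proved. It holds when the nearby piece of $B$ is an interface between two full-dimensional cells with different $F$-values, but it can fail if the only nearby points with a different $F$-value lie in a lower-dimensional face of the hyperplane arrangement all of whose adjacent full-dimensional cells share the value $F(x(\sigma^n))$: there one cannot leave $B^{+\beta_n}$ without reverting to the original value, and the set of starting profiles affected by such a face need not have vanishing measure. This is, however, the same gap already implicit in the converse direction of Lemma~\ref{lem:necessary} (the existence of $\hat y \in D_n$ with $F(\hat y)\neq F(x(\sigma^n))$ within $m!/n$ of a boundary point), so your treatment is no less rigorous than the paper's; a fully careful version would either exclude such degenerate assignments of $F$ on lower-dimensional faces or define $\tilde B$ as the union of interfaces between full-dimensional cells of different values.
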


\subsection{Smoothness of the phase transition}\label{sec:smooth} 

In this final subsection our goal is to show Parts~\ref{limit} and~\ref{smooth} of Theorem~\ref{thm:main2}. The existence of the limits in Part~\ref{limit} follows immediately from the Gaussian limit described above; we do not detail this, but rather give formulas for these limiting probabilities. These then imply the properties described in Part~\ref{smooth} of the theorem.

In the following let the hyperplane rule be given by affine hyperplanes $H_1, \dots, H_{\ell}$ of $\Delta^{m!}$ and the function $F : \Delta^{m!} \to \left[ m \right]$; in the limiting setting denote by $\tilde{H}_1, \dots, \tilde{H}_{\tilde{\ell}}$ the corresponding hyperplanes of $\R^{m!}|_{H_0}$ and denote by $\tilde{F} : \R^{m!}|_{H_0} \to \left[ m \right]$ the corresponding function.

\subsubsection{The quantities $\overline{q}$ and $\underline{q}$}\label{sec:smooth_some} 

For $\tilde{x} \in \R^{m!}|_{H_0}$ define
\[
 \alpha \left( \tilde{x} \right) := \inf_{\tilde{y} : \tilde{F} \left( \tilde{y} \right) \neq \tilde{F} \left( \tilde{x} \right)}  d_1 \left( \tilde{x}, \tilde{y} \right), \qquad \qquad \qquad \beta \left( \tilde{x} \right) := \max_{a \in \left[ m \right]} \inf_{\tilde{y} : \tilde{F} \left( \tilde{y} \right) = a} d_1 \left( \tilde{x}, \tilde{y} \right).
\]
From the previous subsection it is then immediate that we can write
\begin{align*}
 \overline{q} \left( c \right) &= \mu \left( \tilde{X} : \alpha \left( \tilde{X} \right) \leq 2c \right),\\
 \underline{q} \left( c \right) &= \mu \left( \tilde{X} : \beta \left( \tilde{X} \right) \leq 2c \right).
\end{align*}
It is important to note that the boundary $\tilde{B}$ is contained in the union of finitely many hyperplanes, $\tilde{H}_1, \dots, \tilde{H}_{\tilde{\ell}}$, and thus the regions where $\tilde{F}$ is constant are convex cones which are the intersection of finitely many halfspaces. Consequently $\alpha \left( \tilde{x} \right)$ is either $d_1 \left( \tilde{x}, 0 \right)$, where $0$ denotes the origin of $\R^{m!}$, or it is $d_1 \left( \tilde{x}, \tilde{H}_j \right)$ for some $1 \leq j \leq \tilde{\ell}$, where $d_1 \left( \tilde{x}, \tilde{H}_j \right) = \inf_{\tilde{y} \in \tilde{H}_j} d_1 \left( \tilde{x}, \tilde{y} \right)$. 
If we scale $\tilde{x}$ by some positive constant $\lambda$, then the distance from the origin and from every hyperplane scales as well (i.e., $d_1 \left( \lambda \tilde{x}, 0 \right) = \lambda d_1 \left( \tilde{x}, 0 \right)$ and $d_1 \left( \lambda \tilde{x}, \tilde{H}_j \right) = \lambda d_1 \left( \tilde{x}, \tilde{H}_j \right)$), and thus for every $\lambda > 0$, we have $\alpha \left( \lambda \tilde{x} \right) = \lambda \alpha \left( \tilde{x} \right)$. 
Consequently, if we write $\tilde{x} = \left\| \tilde{x} \right\|_2 \tilde{s}$, where $\tilde{s} \in S^{m! - 1}$, and $S^{m!-1}$ denotes the $\left( m! - 1 \right)$-sphere (not to be confused with $S_m^n$, the set of ranking profiles on $n$ voters and $m$ candidates), then we have $\alpha \left( \tilde{x} \right) = \left\| \tilde{x} \right\|_2 \alpha \left( \tilde{s} \right)$. 

The same scaling property holds for $\beta$ as well, and hence we have
\begin{align}
 \overline{q} \left( c \right) &= \mu \left( \tilde{X} : \left\| \tilde{X} \right\|_2 \alpha \left( \tilde{S} \right) \leq 2c \right), \label{eq:q1}\\
 \underline{q} \left( c \right) &= \mu \left( \tilde{X} : \left\| \tilde{X} \right\|_2 \beta \left( \tilde{S} \right) \leq 2c \right).\label{eq:q2}
\end{align}

Recall that our condition that for every $a \in \left[ m \right]$, $\p \left( f \left( \sigma \right) = a \right) \geq \eps$, implies that for every $\eta > 0$ and for every $a \in \left[ m \right]$ there exists $\tilde{x} \in \R^{m!}|_{H_0}$ such that $\left\| \tilde{x} \right\|_2 \leq \eta$ and $\tilde{F} \left( \tilde{x} \right) = a$. Consequently for every $\tilde{x} \in \R^{m!}|_{H_0}$ we must have $\alpha \left( \tilde{x} \right) \leq d_1 \left( \tilde{x}, 0 \right)$ and $\beta \left( \tilde{x} \right) \leq d_1 \left( \tilde{x}, 0 \right)$. In particular, for $\tilde{s} \in S^{m!-1}$ we have $d_1 \left( \tilde{s}, 0 \right) \leq \sqrt{m!} d_2 \left( \tilde{s}, 0 \right) = \sqrt{m!}$ and so $\alpha \left( \tilde{s} \right), \beta \left( \tilde{s} \right) \leq \sqrt{m!}$. This immediately implies that for every $c > 0$ we have
\[
 \underline{q} \left( c \right) \geq \mu \left( \tilde{X} : \left\| \tilde{X} \right\|_2 \leq \frac{2c}{\sqrt{m!}} \right) > 0.
\]

To show that $\overline{q} \left( c \right) < 1$, note that since the boundary is contained in the union of finitely many hyperplanes, there exists $\tilde{s}^{*} \in S^{m! - 1}$ such that $\alpha \left( \tilde{s}^{*} \right) > 0$. By continuity of $\alpha$, there exists a neighborhood $U \subseteq S^{m!-1}$ of $\tilde{s}^{*}$ such that for every $\tilde{s} \in U$, $\alpha \left( \tilde{s} \right) \geq \alpha \left( \tilde{s}^{*} \right) / 2$. For any $\tilde{x}$ such that $\tilde{x} / \left\| \tilde{x} \right\|_2 \in U$ and $\left\| \tilde{x} \right\|_2 > \frac{4c}{\alpha \left( \tilde{s}^{*} \right)}$, we have
\[
 \alpha \left( \tilde{x} \right) = \left\| \tilde{x} \right\|_2 \alpha \left( \tilde{x} / \left\| \tilde{x} \right\|_2 \right) > \frac{4c}{\alpha \left( \tilde{s}^{*} \right)} \frac{\alpha \left( \tilde{s}^{*} \right)}{2} = 2c.
\]
So consequently
\[
 \overline{q} \left( c \right) \leq 1 - \mu \left( \tilde{X} : \tilde{X} / \left\| \tilde{X} \right\|_2 \in U , \left\| \tilde{X} \right\|_2 > \frac{4c}{\alpha \left( \tilde{s}^{*} \right)}   \right) < 1.
\]

Finally, the fact that $\underline{q} \left( c \right)$ and $\overline{q} \left( c \right)$ are continuously differentiable follows from the formulas \eqref{eq:q1} and \eqref{eq:q2}, since $\underline{q} \left( c \right)$ and $\overline{q} \left( c \right)$ are both written as the Gaussian volume of a subset of $\R^{m!}|_{H_0}$, and in both cases this subset grows continuously as $c$ increases. The derivative of both $\underline{q} \left( c \right)$ and $\overline{q} \left( c \right)$ is bounded at zero (by Corollary~\ref{cor:limit} and Claim~\ref{claim:hyperplanes_limit}), while as $c \to \infty$ the derivative approaches zero, and since the derivative is continuous, it must be bounded by a constant for the whole half-line.  

\subsubsection{The quantities $\overline{r}$ and $\underline{r}$}\label{sec:smooth_specific} 

In the previous setup when the coalition of size $c \sqrt{n}$ was not specified, the ranking profile could be changed arbitrarily within a Hamming ball of radius $c \sqrt{n}$. On the probability simplex $\Delta^{m!}$ this corresponded to an $L^1$ ball of radius $2c / \sqrt{n}$, and in the rescaled limiting setting it corresponded to an $L^1$ ball in $\R^{m!}|_{H_0}$ of radius $2c$. When the coalition of size $c \sqrt{n}$ is specified, things are slightly different. In particular, when we look at the probability distribution on the probability simplex $\Delta^{m!}$ induced by the distribution on ranking profiles (or, in the limiting setting, the Gaussian distribution on $\R^{m!}|_{H_0}$), then we have lost track of the votes of any specific coalition. Nonetheless, the Gaussian limiting setting still provides formulas for the limiting probabilities $\underline{r} \left( c \right)$ and $\overline{r} \left( c \right)$.

We can first draw a random ranking profile for the other $n - c \sqrt{n}$ voters not in the coalition, $\sigma^{n - c\sqrt{n}}$, and then the voters in the coalition can set their votes arbitrarily. The question is, how can the coalition affect the outcome of the vote? In particular, (a) can they change the outcome of the election, and (b) can they elect any candidate? 

The ranking profile $\sigma^{n - c\sqrt{n}}$ corresponds to a point $x \left( \sigma^{n - c\sqrt{n}} \right)$ on the probability simplex $\Delta^{m!}$, and by setting their votes the coalition can move this point on the probability simplex in some neighborhood of $x \left( \sigma^{n - c\sqrt{n}} \right)$. We omit the calculation for finite $n$ and only present the result in the limiting setting. 

Suppose the limiting ranking profile of the voters other than the coalition corresponds to the point $\tilde{x} \in \R^{m!}|_{H_0}$. Then the set of points the coalition can reach is the following:
\[
 R_c \left( \tilde{x} \right) := \left\{ \tilde{y} \in \R^{m!}|_{H_0} : \forall \pi \in S_m : \tilde{y}_{\pi} - \tilde{x}_{\pi} + c p \left( \pi \right) \geq 0 \right\}.
\]
We can then define
\begin{align*}
 \varphi \left( \tilde{x} \right) &:= \inf \left\{ \gamma : \exists \tilde{y} \in R_{\gamma} \left( \tilde{x} \right) \text{ such that } \tilde{F} \left( \tilde{y} \right) \neq \tilde{F} \left( \tilde{x} \right) \right\},\\
 \psi \left( \tilde{x} \right) &:= \inf \left\{ \gamma : \forall a \in \left[m \right] \exists \tilde{y} \in R_{\gamma} \left( \tilde{x} \right) \text{ such that } \tilde{F} \left( \tilde{y} \right) = a \right\},
\end{align*}
and it follows immediately that we can then write
\begin{align*}
 \overline{r} \left( c \right) &= \mu \left( \tilde{X} : \varphi \left( \tilde{X} \right) \leq c \right),\\
 \underline{r} \left( c \right) &= \mu \left( \tilde{X} : \psi \left( \tilde{X} \right) \leq c \right).
\end{align*}
In the same way as in Section~\ref{sec:smooth_some} one can argue that $\varphi$ and $\psi$ scale: if $\lambda > 0$ then $\varphi \left( \lambda \tilde{x} \right) = \lambda \varphi \left( \tilde{x} \right)$ and $\psi \left( \lambda \tilde{x} \right) = \lambda \psi \left( \tilde{x} \right)$. Hence we have
\begin{align}
 \overline{r} \left( c \right) &= \mu \left( \tilde{X} : \left\| \tilde{X} \right\|_2 \varphi \left( \tilde{S} \right) \leq c \right), \label{eq:r1}\\
 \underline{r} \left( c \right) &= \mu \left( \tilde{X} : \left\| \tilde{X} \right\|_2 \psi \left( \tilde{S} \right) \leq c \right).\label{eq:r2}
\end{align}
For every $0 < c < \infty$ we have $\overline{r} \left( c \right) \leq \overline{q} \left( c \right) < 1$ (using Section~\ref{sec:smooth_some}). Let us now show that also $\underline{r} \left( c \right) > 0$. We claim that for all $\tilde{s} \in S^{m!-1}|_{H_0}$, $\psi \left( \tilde{s} \right) \leq \frac{2}{\delta}$. This follows from the fact that if $\tilde{s} \in S^{m!-1}|_{H_0}$ then $S^{m!-1}|_{H_0} \subseteq R_{\frac{2}{\delta}} \left( \tilde{s} \right)$, which is true because if $\tilde{y} \in S^{m!-1}|_{H_0}$ then for all $\pi \in S_m$, $\tilde{y}_{\pi} - \tilde{s}_{\pi} + \frac{2}{\delta} p \left( \pi \right) \geq -1 -1 + \frac{2}{\delta} \delta = 0$.
Thus we have
\[
 \underline{r} \left( c \right) \geq \mu \left( \tilde{X} : \left\| \tilde{X} \right\|_2 \leq \frac{c\delta}{2} \right) > 0
\]
as claimed.

Finally, the fact that $\underline{r} \left( c \right)$ and $\overline{r} \left( c \right)$ are continuously differentiable follows from the formulas \eqref{eq:r1} and \eqref{eq:r2} using an argument given above: $\underline{r} \left( c \right)$ and $\overline{r} \left( c \right)$ are written as the Gaussian volume of subsets of $\R^{m!}|_{H_0}$, and these subsets grow continuously as $c$ increases. The derivative of both $\underline{r} \left( c \right)$ and $\overline{r} \left( c \right)$ is bounded at zero (by Corollary~\ref{cor:limit} and Claim~\ref{claim:hyperplanes_limit}), while as $c \to \infty$ the derivative approaches zero, and since the derivative is continuous, it must be bounded by a constant for the whole half-line. \qed


\section*{Acknowledgments}

We thank anonymous referees for helpful comments.


\bibliographystyle{plain}
\bibliography{coal_manip,abb,ultimate}


\appendix

\section{\texorpdfstring{Decomposing $\R^d$ as the disjoint union of finitely many convex cones: only via hyperplanes}{Decomposing R^{d} as the disjoint union of finitely many convex cones: only via hyperplanes}}\label{sec:kemperman} 

For self-containment, we reproduce here the main definitions and results of Kemperman~\cite{kemperman1986decomposing} that make precise the claim used in the proof of Lemma~\ref{lem:equiv} that the only way to decompose $\Q_{\geq 0}^d \setminus \left\{ 0 \right\}$ into the disjoint union of finitely many $\Q$-convex cones is via hyperplanes. Kemperman's paper deals with convex sets in general, but here we summarize the results about convex \emph{cones} that are relevant to us. Kemperman's results pertain to finite dimensional linear spaces and we will state them in this form; in the end we show how results for $\R_{\geq 0}^d$ follow immediately from these, and as a consequence we also obtain the claim used in the proof of Lemma~\ref{lem:equiv}.

Let us start with the main definitions. In the following, all linear spaces are over the reals and are finite dimensional. Let $X$ be a linear space. A convex cone is a subset $K \subseteq X$ such that $x,y \in K$ and $\lambda > 0$ imply $x + y \in K$ and $\lambda x \in K$. (We do not require that $0 \in K$.) For a set $A \subseteq X$, denote its affine hull by $\aff \left( A \right)$, its convex hull by $\cvx \left( A \right)$, and its closure by $\cl \left( A \right)$. Note that if $K \subseteq X$ is a convex cone, then $\aff \left( K \right)$ is a linear subspace of $X$.

We define two special types of convex cones: basic convex cones and elementary convex cones.

\begin{definition}[Basic convex cone]
 Let $K$ be a convex cone in a finite dimensional linear space $X$. We say that $K$ is a \emph{basic convex cone} (in $X$) if $K$ is a member $K = K_0$ of some partition
\[
 X =  K_0 \dot{\cup} K_1 \dot{\cup} \dots \dot{\cup} K_r
\]
 of $X$ into finitely many disjoint convex cones $\left\{K_i \right\}_{i=0}^r$.
\end{definition}
Note that any linear subspace $Y$ of $X$ is a basic convex cone, from which it immediately follows that $K$ is a basic convex cone in $X$ if and only if it is a basic convex cone in $\aff \left( K \right)$. 

In order to define elementary convex cones, we need a few more definitions.

\begin{definition}[Open polyhedral convex cone]
 Let $K$ be a convex cone in a finite dimensional linear space $X$. We say that $K$ is an \emph{open polyhedral convex cone} relative to $X$ if $K$ can be expressed as the intersection of finitely many open halfspaces $H_1, \dots, H_{\ell}$ of $X$, each of which has the origin on its boundary. The whole linear space $X$ is an open polyhedral convex cone with $\ell=0$.
\end{definition}

\begin{definition}[Relatively open polyhedral convex cone]
 Let $K$ be a convex cone in a finite dimensional linear space $X$. Then $K$ is a \emph{relatively open polyhedral convex cone} if either $K = \emptyset$ or $K$ is an open polyhedral convex cone relative to $\aff \left( K \right)$.
\end{definition}

\begin{definition}[Elementary convex cone]
 Let $K$ be a convex cone in a finite dimensional linear space $X$. We say that $K$ is an \emph{elementary convex cone} if $K$ can be represented as a disjoint union of finitely many relatively open polyhedral convex cones.
\end{definition}

The main result of Kemperman concerning convex cones is the following~\cite[Theorem 2]{kemperman1986decomposing}.

\begin{theorem}\label{thm:kemperman}
 Let $K$ be a convex cone in $\R^d$. Then $K$ is a basic convex cone if and only if it is an elementary convex cone.
\end{theorem}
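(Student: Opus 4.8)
The plan is to prove the two implications separately, after isolating one bookkeeping fact: a finite (not necessarily disjoint) union of relatively open polyhedral convex cones is again an elementary convex cone. To see this, collect the finitely many hyperplanes of $\R^d$ through the origin that cut out the affine hulls and the defining open halfspaces of all the cones in the union; the ``sign cells'' of the resulting central arrangement are themselves relatively open polyhedral convex cones, they partition $\R^d$, and each of the given cones---lying on a single side of, or inside, every one of these hyperplanes---is a union of sign cells, hence so is the whole union. I will also use freely, as noted in the excerpt, that a convex cone $K$ is basic in $X$ iff it is basic in $\aff\left( K \right)$, together with the fact that being a relatively open polyhedral cone, and hence being elementary, is intrinsic to the affine hull.

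For \emph{elementary $\Rightarrow$ basic}: given $K = C_1 \dot{\cup} \dots \dot{\cup} C_N$ with each $C_i$ a relatively open polyhedral convex cone, the fact above produces a finite partition $\R^d = P_1 \dot{\cup} \dots \dot{\cup} P_M$ into relatively open polyhedral convex cones such that $K$ is exactly the union of those $P_j$ it meets. Then $\left\{ K \right\} \cup \left\{ P_j : P_j \cap K = \emptyset \right\}$ is a partition of $\R^d$ into finitely many disjoint convex cones (each $P_j$ is a convex cone, and $K$ is a convex cone by hypothesis), which exhibits $K$ as a basic convex cone.

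For \emph{basic $\Rightarrow$ elementary} I would induct on $d$, the case $d = 0$ being trivial. Let $\R^d = K_0 \dot{\cup} \dots \dot{\cup} K_r$ be a partition into convex cones; it suffices to show $K_0$ is elementary. If $K_0$ is not full-dimensional, restrict the partition to $V := \aff\left( K_0 \right)$: the sets $K_j \cap V$ partition $V$ into convex cones and $K_0 \cap V = K_0$, so by the inductive hypothesis (in $V$, where $\dim V < d$) $K_0$ is elementary. If $K_0$ is full-dimensional, the crux is to show that $Q_0 := \cl\left( K_0 \right)$ is a polyhedral cone. Granting this, write $Q_0 = \bigcap_{t=1}^{p} H_t^{+}$ with facet hyperplanes $H_1, \dots, H_p$ through the origin; since $\mathrm{int}\left( Q_0 \right) = \mathrm{int}\left( K_0 \right) \subseteq K_0 \subseteq Q_0$ we get $K_0 = \mathrm{int}\left( Q_0 \right) \dot{\cup} \left( K_0 \cap \partial Q_0 \right)$, where $\mathrm{int}\left( Q_0 \right) = \bigcap_{t} \mathrm{int}\left( H_t^{+} \right)$ is a relatively open polyhedral convex cone and $K_0 \cap \partial Q_0 = \bigcup_{t=1}^{p} \left( K_0 \cap H_t \right)$. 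For each $t$, restricting the partition to $H_t \cong \R^{d-1}$ and invoking the inductive hypothesis shows $K_0 \cap H_t$ is elementary, hence so is the finite union $K_0 \cap \partial Q_0$ by the bookkeeping fact, and therefore $K_0$ is elementary, completing the induction.

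The main obstacle is showing that $Q_0 = \cl\left( K_0 \right)$ is polyhedral in the full-dimensional case; this is the one place where finiteness of the partition is genuinely used. I would argue by separation: since $\mathrm{int}\left( Q_0 \right) \subseteq K_0$, each $K_j$ with $j \geq 1$ is a convex set whose closure misses the open set $\mathrm{int}\left( Q_0 \right)$, so a separating hyperplane gives a closed halfspace $H_j^{+}$ with $\mathrm{int}\left( Q_0 \right) \subseteq \mathrm{int}\left( H_j^{+} \right)$ (hence $Q_0 = \cl\left( \mathrm{int}\, Q_0 \right) \subseteq H_j^{+}$) and $\cl\left( K_j \right) \cap \mathrm{int}\left( H_j^{+} \right) = \emptyset$. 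Then $\R^d \setminus Q_0 \subseteq \bigcup_{j \geq 1} K_j \subseteq \bigcup_{j \geq 1} \left( \R^d \setminus \mathrm{int}\left( H_j^{+} \right) \right)$, and complementation gives $\bigcap_{j \geq 1} \mathrm{int}\left( H_j^{+} \right) \subseteq Q_0$; since $\bigcap_{j} H_j^{+}$ is a full-dimensional polyhedron containing $\mathrm{int}\left( Q_0 \right)$, passing to closures forces $Q_0 = \bigcap_{j \geq 1} H_j^{+}$, a polyhedron, and being a cone it is a polyhedral cone. The remaining points that need care are the disjointification hidden in ``a finite union of elementary sets is elementary'' and the identity $\partial Q_0 = \bigcup_{t} \left( Q_0 \cap H_t \right)$ for the facet hyperplanes of a polyhedral cone, both of which are routine once the standard facts about polyhedra and relative interiors are in place.
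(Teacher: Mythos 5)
Your proof is correct, and for the main (``only if'') direction it follows the same skeleton as the paper's: induction on dimension, reduction to $\aff\left(K_0\right)$, a separation argument against the other cones of the partition to show that the closure of a full-dimensional $K_0$ is a polyhedral cone (this is exactly the paper's step establishing $L^0 = K^0$, just phrased in terms of $\cl\left(K_0\right)$ rather than the relative interior), and then an inductive treatment of $K_0$ intersected with the facet hyperplanes. The one genuine divergence is the final disjointification of $\bigcup_t \left( K_0 \cap H_t \right)$: the paper does this by hand, showing via a convexity argument that the relative interiors $\relint\left(F_i \cap K\right)$ are pairwise disjoint and then iterating on the lower-dimensional leftovers, whereas you invoke a common-refinement lemma (the sign cells of the central arrangement generated by all defining hyperplanes). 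Your route is cleaner and buys two things: the same lemma immediately yields that any finite, possibly overlapping, union of relatively open polyhedral cones decomposes disjointly, and it gives the ``if'' direction essentially for free --- a direction the paper explicitly leaves as an exercise. Two cosmetic points to tighten: the set $K_0 \cap \partial Q_0$ need not itself be a convex cone, so strictly speaking your ``bookkeeping fact'' should be stated as ``a finite union of relatively open polyhedral convex cones is a finite \emph{disjoint} union of such cones'' rather than ``is an elementary convex cone'' (this is all you use anyway); and when discarding empty sign cells in the ``if'' direction, note that the paper's definition of convex cone admits the empty set, so nothing is lost.
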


In Lemma~\ref{lem:equiv} we only use the ``only if'' direction, and we thus leave the proof of the ``if'' direction as an exercise for the reader.

\begin{proof}[Proof of ``only if'' direction.]
 Let $X$ be a finite dimensional linear space and let $K$ be a basic convex cone in $X$ of dimension $d = \dim \left( K \right) =  \dim \left( Y \right)$, where $Y = \aff \left( K \right)$. We prove by induction on $d$ the following:

\begin{enumerate}[(i)]
 \item The relative interior of $K$, denoted by $K^0$, is a relatively open polyhedral convex cone.
 \item If $K^0 \neq Y$, then denote by $F_1, \dots, F_{\ell}$ the $\left( d - 1 \right)$-dimensional hyperplanes in $Y$ corresponding to the finitely many faces of the polyhedron $\cl \left( K \right) = \cl \left( K^0 \right)$. Then the convex cones $F_i \cap K$, $i = 1, \dots, \ell$, are elementary convex cones of dimension at most $d - 1$ (but they need not be disjoint).
 \item The convex cone $K$ is also an elementary convex cone.
\end{enumerate}

 If $K = \emptyset$, then properties (i) - (iii) hold. If $d = 0$, then necessarily $K = \left\{ 0 \right\}$, since $K$ is a convex cone, and again $K$ satisfies properties (i) - (iii) above.

 So we may assume that $d \geq 1$ and that each basic convex cone of dimension at most $d-1$ satisfies properties (i) - (iii) above. Since $K$ is a basic convex cone, there exists a partition
\begin{equation}\label{eq:partition}
 Y = K_0 \dot{\cup} K_1 \dot{\cup} \dots \dot{\cup} K_r
\end{equation}
of $Y$ into finitely many disjoint convex cones $\left\{ K_j \right\}_{j=0}^r$, with $K_0 = K$. We may assume that $r \geq 0$ is minimal, and hence the $K_j$ are non-empty. Note that $K^0$ is also non-empty since $\dim \left( K \right) = \dim \left( Y \right)$.

If $r=0$ then $K= K_0 = Y$ and the properties (i) - (iii) above are immediately satisfied, so we may assume that $r \geq 1$. For $j = 1, \dots, r$, let $H_j$ be a hyperplane in $Y$ which separates the convex cone $K = K_0$ with non-empty interior $K^0$ from the non-empty convex cone $K_j$. (Such hyperplanes exist by the hyperplane separation theorem, and, moreover, each such hyperplane goes through the origin, because each $K_j$ contains at least one point from every open ball around the origin, since each $K_j$ is a cone.) Let $H_j^0$ be the associated open half space in $Y$ which contains the interior $K^0$ of $K$. Let
\[
 L^0 = H_1^0 \cap \dots \cap H_r^0.
\]
Then $L^0$ is a polyhedral convex cone, which is open relative to $Y$, and contains the interior $K^0$ of $K$.

We claim that $L^0 = K^0$. It is enough to show that $L^0 \subseteq K$, because then $L^0 \subseteq K^0$ follows from the definition of $K^0$. Suppose on the contrary that there exists $x \in L^0$ such that $x \notin K$. Then from the partition \eqref{eq:partition} there must exist an index $1 \leq j \leq r$ with $x \in K_j$. This implies that $x \notin H_j^0$ and thus $x \notin L^0$, which is a contradiction. This proves (i).

Now let us show (ii). By \eqref{eq:partition}, we can write the linear space $F_i$ as the disjoint union of the convex cones $F_i \cap K_j$, $j = 0, \dots, r$, and thus $F_i \cap K$ is a basic convex cone and hence, by induction, an elementary convex cone.

Finally, let us show that $K$ is an elementary convex cone. Since $K^0$ is a polyhedral convex cone which is open relative to $Y$, it only remains to show that $K \setminus K^0$ can be written as a finite disjoint union of relatively open polyhedral convex cones. By (ii), we can write $K \setminus K^0$ as the finite union of elementary convex cones:
\[
 K \setminus K^0 = \cup_{i=1}^{\ell} \left( F_i \cap K \right),
\]
so what remains is to show that we can write this as a finite \emph{disjoint} union of relatively open polyhedral convex cones. We may assume w.l.o.g.\ that $F_i \cap K \neq \emptyset$ for all $i$ and that $\left( F_i \cap K \right) \nsubseteq \left( F_j \cap K \right)$ for all $i \neq j$ (otherwise we can leave out $F_i \cap K$ from the union).

We claim that then for every $i$,
\begin{equation}\label{eq:relint}
 \relint \left( F_i \cap K \right) \subseteq \left( F_i \cap K \right) \setminus \bigcup_{j \neq i} \left( F_j \cap F_i \cap K \right),
\end{equation}
from which it immediately follows that $\relint \left( F_i \cap K \right) \cap \relint \left( F_j \cap K \right) = \emptyset$ for $i \neq j$. To show \eqref{eq:relint}, let the two open halfspaces on either side of the hyperplane $F_j$ be denoted by $F_j^+$ and $F_j^-$. W.l.o.g.\ assume that $K \cap F_j^- = \emptyset$. Since $\left( F_i \cap K \right) \nsubseteq \left( F_j \cap K \right)$, we must have $\left( F_i \cap K \right) \cap F_j^+ \neq \emptyset$. Let $x \in \left( F_i \cap K \right) \cap F_j^+$ and let $y \in F_j \cap F_i \cap K$. Since $F_i \cap K$ is convex, the interval from $x$ to $y$ is contained in $F_i \cap K$, but because $\left( F_i \cap K \right) \cap F_j^- = \emptyset$, no points on this line past the point $y$ can be in $F_i \cap K$; hence $y \notin \relint \left( F_i \cap K \right)$.

Since $F_i \cap K$ is a basic convex cone, $\relint \left( F_i \cap K \right)$ is a relatively open polyhedral convex cone by induction. If $F_i \cap K = \aff \left( F_i \cap K \right)$ then $\relint \left( F_i \cap K \right) = F_i \cap K$. If not, then denote by $F_{i,1}, \dots, F_{i,\ell_i}$ the hyperplanes in $\aff \left( F_i \cap K \right)$ corresponding to the finitely many faces of the polyhedron $\cl \left( F_i \cap K \right)$. By induction, the convex cones $F_{i,j} \cap F_i \cap K$, $j = 1, \dots, \ell_i$, are elementary convex cones, and we can write
\[
 K \setminus K^0 = \left( \dot{\cup}_{i=1}^{\ell} \relint \left( F_i \cap K \right) \right) \dot{\bigcup} \left( \cup_{i=1}^{\ell} \cup_{j=1}^{\ell_i} \left( F_{i,j} \cap F_i \cap K \right) \right).
\]
What remains to be shown is that $\cup_{i=1}^{\ell} \cup_{j=1}^{\ell_i} \left( F_{i,j} \cap F_i \cap K \right)$ can be written as a finite disjoint union of relatively open polyhedral convex cones; this follows by iterating the previous argument.
\end{proof}

Let us now show that $\R_{\geq 0}^d$ is a basic convex cone in $\R^d$. For $i = 1, \dots, d$, define the closed halfspace $H_i^{\geq 0} = \left\{ x \in \R^d : x_i \geq 0 \right\}$ and its complement $H_i^{<0} = \left\{ x \in \R^d : x_i < 0 \right\}$, and from these define the convex cones
\[
 K_i = H_1^{\geq 0} \cap \dots \cap H_{i-1}^{\geq 0} \cap H_i^{< 0}, \qquad i = 1, \dots, d.
\]
Then we can write $\R^{d}$ as the disjoint union of the convex cones $\R_{\geq 0}^d$ and $K_1, \dots, K_d$, showing that indeed $\R_{\geq 0}^d$ is a basic convex cone. This implies that if we can write $\R_{\geq 0}^d$ as the disjoint union of the convex cones $C_1, \dots, C_r$, then each $C_i$ is a basic convex cone, and hence, by Theorem~\ref{thm:kemperman}, an elementary convex cone.

Now let us turn to the claim in the proof of Lemma~\ref{lem:equiv}. In Lemma~\ref{lem:equiv}, we write $\Q_{\geq 0}^{m!} \setminus \left\{ 0 \right\}$ as the disjoint union of finitely many $\Q$-convex cones: $\Q_{\geq 0}^{m!} \setminus \left\{ 0 \right\} = C_0 \dot{\cup} C_1 \dot{\cup} \dots \dot{\cup} C_r$. For $i = 0, \dots, r$, let $\tilde{C}_i = \cvx \left( C_i \right)$. It is known (see, e.g.,~\cite{young1975social}) that $C_i = \Q^{m!} \cap \tilde{C}_i$. The $\tilde{C}_i$ are therefore disjoint convex cones which satisfy
\begin{equation}\label{eq:almost_bcc1}
 \tilde{C}_0 \dot{\cup} \tilde{C}_1 \dot{\cup} \dots \dot{\cup} \tilde{C}_r \subseteq \R_{\geq 0}^{m!}
\end{equation}
and
\begin{equation}\label{eq:almost_bcc2}
 \cl \left( \tilde{C}_0 \right) \cup \cl \left( \tilde{C}_1 \right) \cup \dots \cup \cl \left( \tilde{C}_r \right) = \R_{\geq 0}^{m!}.
\end{equation}
Our goal is to show that each $\tilde{C}_i$ is an elementary convex cone. Conditions \eqref{eq:almost_bcc2} and \eqref{eq:almost_bcc1} are very similar to the definition of a basic convex cone; in this spirit let us introduce the following definition.
\begin{definition}[Basic convex cone up to closure]
 Let $K_0$ be a convex cone in a finite dimensional linear space $X$. We say that $K_0$ is a \emph{basic convex cone up to closure} (in $X$) if there exist disjoint convex cones $K_1, \dots, K_r$ such that
\[
 K_0 \dot{\cup} K_1 \dot{\cup} \dots \dot{\cup} K_r \subseteq X
\]
and
\[
 \cl \left( K_0 \right) \cup \cl \left( K_1 \right) \cup \dots \cup \cl \left( K_r \right) = X.
\]
\end{definition}
Since $\R_{\geq 0}^d$ is a basic convex cone, the $\tilde{C}_i$ above are basic convex cones up to closure.

In fact, every basic convex cone up to closure is an elementary convex cone; the proof is exactly the same as the one shown above for the ``only if'' direction of Theorem~\ref{thm:kemperman}, one just needs to replace ``basic convex cone'' with ``basic convex cone up to closure'' everywhere in the proof, and make the appropriate changes. Moreover, the other direction of Theorem~\ref{thm:kemperman} implies that actually every basic convex cone up to closure is a basic convex cone.

Hence the $\tilde{C}_i$ are elementary convex cones, which is what we need in Lemma~\ref{lem:equiv}.

\section{Most voting rules are hyperplane rules: examples}\label{sec:app-examples} 

 In the following we show that all positional scoring rules, instant-runoff voting, Coombs' method, contingent vote, the Kem\'eny-Young method, Bucklin voting, Nanson's method, Baldwin's method, and Copeland's method are all hyperplane rules.

\begin{itemize}
 \item {\bf Positional scoring rules.} Let $w \in \R^m$ be a weight vector. Given a ranking profile vector $\sigma$, the (normalized) \emph{score} of candidate $a \in \left[m\right]$ is $s_a = \frac{1}{n}\sum_{i=1}^n w \left( \sigma_i^{-1} \left( a \right) \right)$. The \emph{positional scoring rule} associated to the weight vector $w$ elects the candidate who has the highest score. (In case of a tie, there is some tie-breaking rule, but we do not care about this here.) We denote such a SCF on $n$ voters by $f_n^w$. Examples include plurality (with weight vector $w = \left( 1, 0, 0, \dots, 0 \right)$), Borda count (with weight vector $w = \left( m-1, m-2, \dots, 0 \right)$) and veto (with weight vector $w = \left( 1, 1, \dots, 1, 0 \right)$).

 To a sequence of SCFs $\left\{f_n^w\right\}_{n\geq 1}$ we can associate a function $F^w : \Delta^{m!} \to \left[m\right]$ in the following way. For a candidate $a \in \left[ m \right]$ and $x \in \Delta^{m!}$, define the (normalized) score $s_a \left( x \right) = \sum_{\pi \in S_m} x_{\pi} w \left( \pi^{-1} \left( a \right) \right)$, and let
\[
 F^{w} \left( x \right) := \argmax_{a \in \left[m\right]} s_a \left( x \right),
\]
if this $\argmax$ is unique, and if it is not unique, then there is some tie-breaking rule. This construction guarantees that $f_n^w = F^{w}|_{D_n}$. For candidates $a \neq b$, define
\[
 H_{a,b} := \left\{ x \in \Delta^{m!} : s_a \left( x \right) = s_b \left( x \right) \right\},
\]
which is an affine hyperplane of the probability simplex $\Delta^{m!}$. Clearly the boundary $B^w$ is contained in the union of $\binom{m}{2}$ such affine hyperplanes:
\[
 B^w \subseteq \bigcup_{a \neq b \in \left[m \right]} H_{a,b}.
\]

 \item {\bf Instant-runoff voting.} If a candidate receives absolute majority of first preference votes, then that candidate wins. If no candidate receives an absolute majority, then the candidate with fewest top votes is eliminated. In the next round the votes are counted again, with each ballot counted as one vote for the advancing candidate who is ranked highest on that ballot. This is repeated until the winning candidate receives a majority of the vote against the remaining candidates.

 The boundary corresponds to two kinds of situations: either (1) there is a tie at the top at the end, when only two candidates remain; or (2) there is a tie for eliminating a candidate at the end of one of the rounds. Technically situation (1) is also contained in situation (2), since at the very end one can view choosing a winner as eliminating the second placed candidate. One can see that if candidates $a$ and $b$ are tied for elimination after candidates $C \subseteq \left[m\right] \setminus \left\{a,b\right\}$ (where $C = \emptyset$ is allowed) have been eliminated, then necessarily
  \[
   \sum_{C' \subseteq C} \sum_{\substack{\left\{ \pi \left( 1 \right), \dots, \pi \left( \left| C' \right| \right) \right\} = C',\\ \pi \left( \left| C' \right| + 1 \right) = a}} x_{\pi} = \sum_{C' \subseteq C} \sum_{\substack{\left\{ \pi \left( 1 \right), \dots, \pi \left( \left| C' \right| \right) \right\} = C',\\ \pi \left( \left| C' \right| + 1 \right) = b}} x_{\pi}.
  \]
 Consequently the boundary $B$ is contained in the union of at most $m^2 2^m$ affine hyperplanes:
  \[
   B \subseteq \bigcup_{a \neq b} \bigcup_{C \subseteq \left[ m \right] \setminus \left\{a,b\right\}} \left\{ x \in \Delta^{m!} : \sum_{C' \subseteq C} \sum_{\substack{\left\{ \pi \left( 1 \right), \dots, \pi \left( \left| C' \right| \right) \right\} = C',\\ \pi \left( \left| C' \right| + 1 \right) = a}} x_{\pi} = \sum_{C' \subseteq C} \sum_{\substack{\left\{ \pi \left( 1 \right), \dots, \pi \left( \left| C' \right| \right) \right\} = C',\\ \pi \left( \left| C' \right| + 1 \right) = b}} x_{\pi} \right\}.
  \]

 \item {\bf Coombs' method.} This is similar to IRV, but the elimination rule is different. If a candidate receives absolute majority of first preference votes, then that candidate wins. If no candidate receives an absolute majority, then the candidate who is ranked last by the most voters is eliminated. In the next round the votes are counted again, with each ballot counted as one vote for the advancing candidate who is ranked highest on that ballot. This is repeated until the winning candidate receives a majority of the vote against the remaining candidates.

 The boundary corresponds to two kinds of situations: either (1) there is a tie at the top at the end, when only two candidates remain; or (2) there is a tie for eliminating a candidate at the end of one of the rounds. Technically situation (1) is also contained in situation (2), since at the very end one can view choosing a winner as eliminating the second placed candidate. One can see that if candidates $a$ and $b$ are tied for elimination after candidates $C \subseteq \left[m\right] \setminus \left\{a,b\right\}$ (where $C = \emptyset$ is allowed) have been eliminated, then necessarily
  \[
   \sum_{C' \subseteq C} \sum_{\substack{\left\{ \pi \left( m \right), \dots, \pi \left( m - \left| C' \right| + 1 \right) \right\} = C',\\ \pi \left( m - \left| C' \right| \right) = a}} x_{\pi} = \sum_{C' \subseteq C} \sum_{\substack{\left\{ \pi \left( m \right), \dots, \pi \left( m - \left| C' \right| + 1 \right) \right\} = C',\\ \pi \left( m - \left| C' \right| \right) = b}} x_{\pi}.
  \]
 Consequently the boundary $B$ is contained in the union of at most $m^2 2^m$ affine hyperplanes:
{\small
  \[
   B \subseteq \bigcup_{a \neq b} \bigcup_{C \subseteq \left[ m \right] \setminus \left\{a,b\right\}} \left\{ x \in \Delta^{m!} : \sum_{C' \subseteq C} \sum_{\substack{\left\{ \pi \left( m \right), \dots, \pi \left( m - \left| C' \right| + 1 \right) \right\} = C',\\ \pi \left( m - \left| C' \right| \right) = a}} x_{\pi} = \sum_{C' \subseteq C} \sum_{\substack{\left\{ \pi \left( m \right), \dots, \pi \left( m - \left| C' \right| + 1 \right) \right\} = C',\\ \pi \left( m - \left| C' \right| \right) = b}} x_{\pi} \right\}.
  \]
}
 \item {\bf Contingent vote.} This is also similar to IRV, except here all but two candidates get eliminated after the first round. If a candidate receives absolute majority of first preference votes, then he/she wins. If no candidate receives an absolute majority, then all but the top two leading candidates are eliminated and there is a second count, where the votes of those who supported an eliminated candidate are redistributed among the two remaining candidates. The candidate who then achieves absolute majority wins.

Here the boundary $B$ corresponds to two kinds of situations: either (1) there are two distinct top candidates, and when the votes of the voters who voted for other candidates are redistributed, then the two top candidates are in a dead heat; or (2) there are two or more candidates who receive an equal number of votes in the first round. Both of these situations can be described as subsets of affine hyperplanes, and so $B$ is contained in the union of at most $m \left( m - 1 \right)$ affine hyperplanes:
  \begin{align*}
    B &\subseteq \bigcup_{a \neq b} \left\{ x \in \Delta^{m!} : \sum_{\pi : \pi \left( 1 \right) = a} x_{\pi}  + \sum_{\pi : \pi \left( 1 \right) \notin \left\{a,b \right\}, a \stackrel{\pi}{>} b} x_{\pi} = \sum_{\pi : \pi \left( 1 \right) = b} x_{\pi}  + \sum_{\pi : \pi \left( 1 \right) \notin \left\{a,b \right\}, b \stackrel{\pi}{>} a} x_{\pi}  \right\}\\
    &\cup \bigcup_{a \neq b} \left\{ x \in \Delta^{m!} : \sum_{\pi : \pi \left( 1 \right) = a} x_{\pi} = \sum_{\pi : \pi \left( 1 \right) = b} x_{\pi} \right\}.
  \end{align*}
 
 \item {\bf Kem\'eny-Young method.} Denote by $K$ the Kendall tau distance, which is a metric on permutations which counts the number of pairwise disagreements between the two permutations, i.e.,
  \[
   K \left( \tau_1, \tau_2 \right) = \sum_{\left\{a,b\right\}} \mathbf{1} \left[ a \text{ and } b \text{ are in the opposite order in } \tau_1 \text{ and } \tau_2 \right],
  \]
 where the sum is over all unordered pairs of distinct candidates. Given a ranking profile $\sigma^n$, the Kem\'eny-Young method selects the ranking which minimizes the sum of Kendall tau distances from the votes:
  \[
   \tau = \argmin \sum_{i=1}^{n} K \left( \sigma_i, \tau \right),
  \]
 and then the winner of the election is declared to be $\tau \left( 1 \right)$. For us it will be convenient to write $\tau$ as
  \[
   \tau = \argmin \sum_{\pi} x_{\pi} \left( \sigma^n \right) K \left( \pi, \tau \right).
  \]

 Here if we are on the boundary $B$ then there must exist two rankings $\tau_1$ and $\tau_2$ such that $\tau_1 \left( 1 \right) \neq \tau_2 \left( 1 \right)$ and $\sum_{\pi} x_{\pi} K \left( \pi, \tau_1 \right) = \sum_{\pi} x_{\pi} K \left( \pi, \tau_2 \right)$. Thus $B$ is contained in the union of at most $\left( m! \right)^2$ affine hyperplanes:
  \[
   B \subseteq \bigcup_{\tau_1 \neq \tau_2} \left\{ x \in \Delta^{m!} : \sum_{\pi} x_{\pi} K \left( \pi, \tau_1 \right) = \sum_{\pi} x_{\pi} K \left( \pi, \tau_2 \right) \right\}.
  \]

 \item {\bf Bucklin voting.} First every candidate gets a point from all the voters who ranked them at the top. If there is a candidate who has a majority (i.e., more than $n/2$ points), then that candidate wins. If not, then every candidate gets a point from all the voters who ranked them second. If there is a candidate who has more than $n/2$ points after this, then the candidate with the most points wins (there might be multiple candidates with more than $n/2$ points after a given round). This process is iterated until there is a candidate with more than $n/2$ points.

 Here a point on the boundary $B$ corresponds to a situation where some pair of candidates have the same number of points after some number of rounds. Therefore $B$ is contained in the union of at most $m^2 \left( m - 1 \right) / 2$ affine hyperplanes:
  \[
    B \subseteq \bigcup_{a \neq b} \bigcup_{k=1}^m \left\{ x \in \Delta^{m!} : \sum_{i=1}^k \sum_{\pi : \pi \left( i \right) = a} x_{\pi} = \sum_{i=1}^k \sum_{\pi : \pi \left( i \right) = b} x_{\pi}  \right\}.
  \]

 \item {\bf Nanson's method.} This is Borda count combined with a variation of the instant-runoff voting procedure. First, the Borda scores of all candidates are computed, and then those candidates with Borda score no greater than the average Borda score are eliminated. Then the Borda scores of each remaining candidate are recomputed, as if the eliminated candidates were not on the ballot. This is repeated until there is a final candidate left.

 The boundary corresponds to situations when a candidate's Borda score exactly equals the average score after some candidates have been eliminated. For $C \subseteq \left[ m \right]$, denote by $s_{a,C} \left( x \right)$ the score of candidate $a$ after exactly the candidates in $C$ have been eliminated ($s_{a,C} \left( x \right)$ is a linear function of $\left\{ x_{\pi} \right\}_{\pi \in S_m}$), and denote by $\bar{s}_{C} \left( x \right)$ the average score of remaining candidates after exactly the candidates in $C$ have been eliminated. 
 The boundary $B$ is contained in the union of at most $m 2^m$ affine hyperplanes:
  \[
   B \subseteq \bigcup_{a \in \left[m \right]} \bigcup_{C \subseteq \left[ m \right] \setminus \left\{a \right\}} \left\{ x \in \Delta^{m!} : s_{a,C} \left( x \right) = \bar{s}_C \left( x \right) \right\}.
  \]

 \item {\bf Baldwin's method.} This is essentially Borda count combined with the instant-runoff voting procedure. First, the Borda scores of all candidates are computed, and then the candidate with the lowest score is eliminated. Then the Borda scores of each remaining candidate are recomputed, as if the eliminated candidate were not on the ballot. This is repeated until there is a final candidate left.

 The boundary corresponds to ties for eliminating a candidate at the end of one of the rounds. Borrow the notation $s_{a,C} \left( x \right)$ from the previous example. The boundary $B$ is thus contained in the union of at most $m^2 2^m$ affine hyperplanes:
  \[
   B \subseteq \bigcup_{a \neq b} \bigcup_{C \subseteq \left[ m \right] \setminus \left\{a,b\right\}} \left\{ x \in \Delta^{m!} : s_{a,C} \left( x \right) = s_{b,C} \left( x \right) \right\}.
  \]

 \item {\bf Copeland's method.} This is a pairwise aggregation method: every candidate gets 1 point for each other candidate it beats in a pairwise majority election, and 1/2 a point for each candidate it ties with in a pairwise majority election. The winner is the candidate who receives the most points. This method corresponds to cutting the simplex $\Delta^{m!}$ up into finitely many regions via $\binom{m}{2}$ affine hyperplanes, and in each region the winner is the candidate with the most points.

 While in the previous examples tie-breaking rules were not an issue, here it does become important. We do not care about tie-breaking rules when we are on an affine hyperplane where two candidates tie each other in a pairwise majority election. However, there are open regions in the intersection of halfspaces defined by the affine hyperplanes where candidates are tied at the top with having the same scores. In this case, in order for Copeland to be a hyperplane rule, we need to break ties in favor of the same candidate for the whole region. (This is also how Xia and Conitzer break ties for Copeland's method in~\cite{xia2008generalized}.)

 Using this tie-breaking rule Copeland's method is indeed a hyperplane rule, since the boundary is contained in the union of at most $\binom{m}{2}$ affine hyperplanes:
  \[
   B \subseteq \bigcup_{a \neq b} \left\{ x \in \Delta^{m!} : \sum_{\pi : a \stackrel{\pi}{>} b} x_{\pi} = \sum_{\pi : b \stackrel{\pi}{>} a} x_{\pi}  \right\}.
  \]
%
\end{itemize}

\end{document}